\documentclass[final,5p,times,twocolumn]{elsarticle}
\usepackage{amsmath,xcolor}
\usepackage{graphicx}
\usepackage{epstopdf}
\usepackage{url}
\usepackage{amsthm}
\usepackage{comment}
\usepackage{amssymb}
\usepackage{caption}
\usepackage{subcaption}
\usepackage{amsfonts}
\usepackage[ruled,norelsize,linesnumbered]{algorithm2e}
\usepackage{soul,xcolor}
\usepackage{bm}
\usepackage{tabularx}
\usepackage{float}
\usepackage{array}
%\usepackage{caption}
%\floatstyle{boxed}
%\restylefloat{figure}
%\usepackage{subcaption}
%\usepackage{algpseudocode}pp
\makeatletter
\newcommand{\removelatexerror}{\let\@latex@error\@gobble}
\makeatother
\biboptions{sort&compress}
\journal{Automatica}
%%%%%%%%%%%%%%%%%%%%%%%%%%%%
%\usepackage{algorithmic}
%\usepackage{algpseudocode}
%\usepackage[noend]{algpseudocode}

\newtheorem{theorem}{Theorem}

\newtheorem{remark}{Remark}
\newtheorem{definition}{Definition}
\usepackage{slashbox}

\makeatletter
\def\hlinewd#1{%
\noalign{\ifnum0=`}\fi\hrule \@height #1 %
\futurelet\reserved@a\@xhline}
\makeatother
\removelatexerror
\newcolumntype{L}[1]{>{\raggedright\let\newline\\\arraybackslash\hspace{0pt}}m{#1}}
\newcolumntype{C}[1]{>{\centering\let\newline\\\arraybackslash\hspace{0pt}}m{#1}}
\newcolumntype{R}[1]{>{\raggedleft\let\newline\\\arraybackslash\hspace{0pt}}m{#1}}
\begin{document}

\begin{frontmatter}

\title{Privacy-Preserving Dynamic Average Consensus via State Decomposition: Case Study on Multi-Robot Formation Control}

\author[MSU]{Kaixiang Zhang}\ead{zhangk64@msu.edu}
\author[MSU]{Zhaojian Li\corref{cor1}}\ead{lizhaoj1@egr.msu.edu}
\cortext[cor1]{Corresponding author Z. Li. Tel. +1-5174321821.}
\author[Clemson]{Yongqiang Wang}\ead{yongqiw@clemson.edu}
\author[Saudi]{Ali Louati}\ead{a.louati@psau.edu.sa}
\author[ZJU]{Jian Chen}\ead{jchen@zju.edu.cn}             % e-mail address

\address[MSU]{Department of Mechanical Engineering, Michigan State University, East Lansing, MI 48824, USA.}
\address[Clemson]{Department of Electrical and Computer Engineering, Clemson University, Clemson, SC 29634, USA.}
\address[Saudi]{Information Systems Department, Prince Sattam Bin Abdulaziz University, 11942 Alkharj, Kingdom of Saudi Arabia.}
\address[ZJU]{State Key Laboratory of Industrial Control Technology, School of Mechanical Engineering, Zhejiang University, Hangzhou 310027, China.\vspace{-20pt}}

\begin{abstract}
Dynamic average consensus is a decentralized control/estimation framework where a group of agents cooperatively track the average of local time-varying reference signals. In this paper, we develop a novel state decomposition-based privacy preservation scheme to protect the privacy of agents when sharing information with neighboring agents. Specifically, we first show that an external eavesdropper can successfully wiretap the reference signals of all agents in a conventional dynamic average consensus algorithm. To protect privacy against the eavesdropper, a state decomposition scheme is developed where the original state of each agent is decomposed into two sub-states: one succeeds the role of the original state in inter-node interactions, while the other sub-state only communicates with the first one and is invisible to other neighboring agents. Rigorous analyses are performed to show that 1) the proposed privacy scheme preserves the convergence of the average consensus; and 2) the privacy of the agents is protected such that an eavesdropper cannot discover the private reference signals with any guaranteed accuracy. The developed privacy-preserving dynamic average consensus framework is then applied to the formation control of multiple non-holonomic mobile robots, in which the efficacy of the scheme is demonstrated. 
Numerical simulation is provided to illustrate the effectiveness of the proposed approach.
\end{abstract}
\begin{keyword}
Dynamic Average Consensus
\sep Privacy Preservation
\sep Non-holonomic Mobile Robots
\end{keyword}
\end{frontmatter}

\thispagestyle{empty}
\pagestyle{empty}
\setlength{\abovecaptionskip}{0pt}
\setlength{\belowcaptionskip}{4pt}
\setlength{\textfloatsep}{0pt}

\section{Introduction} 

Average consensus has been extensively studied in recent years. It underpins many advantages of distributed systems and is emerging as an effective tool for diverse applications, including sensor fusion \cite{Olfati-SaberCDC2005,AraguesTRO2012}, distributed resource allocation \cite{kia2017SCL}, and multi-agent coordination \cite{Porfiri2007AUTO,MontijanoTRO2016}. 
%However, in some scenarios involving sensitive information, the adoption of average consensus algorithms is impeded by concerns over the privacy preservation guarantees. Motivated by the demand for protecting sensitive information, this paper considers the privacy preservation issue in the dynamic average consensus using a state decomposition scheme.
%Existing average consensus algorithms can be roughly categorized into static cases \cite{Olfati-Saber2007,Ren2008distributed,Ren2010distributed} and dynamic cases \cite{Zhu2010AUTO,Spanos2005IFAC,Freeman2006CDC,Bai2010CDC,ChenTAC2012,ChenTAC2015}. Specifically, based on local interacting information, the agents in static cases aim at reaching agreement on the average of initial agent states, while the dynamic average consensus is to design a distributed update law such that all the agents can track the average of locally available time-varying reference signals. 
Based on the type of signals to be averaged, average consensus algorithms can be categorized as static \cite{Olfati-Saber2007,Ren2008distributed,Ren2010distributed} or dynamic \cite{Zhu2010AUTO,Spanos2005IFAC,Freeman2006CDC,Bai2010CDC,ChenTAC2012,ChenTAC2015}, where in static average consensus agents seek to reach agreement on the average of initial agent states, whereas the dynamic average consensus is to design a distributed update law such that all agents can track the average of locally available time-varying reference signals. As dynamic average consensus has many emerging applications in distributed control and estimation, it will be the focus of this paper.

So far, different approaches have been proposed to address the dynamic average consensus problems. The initial work \cite{Spanos2005IFAC} designs a consensus algorithm that can track the average of reference signals with steady states. Based on input-to-state stability property, a proportional-integral (PI) algorithm is proposed in \cite{Freeman2006CDC} to achieve consensus task for slowly time-varying and static reference signals. The PI algorithm is further generalized in \cite{Bai2010CDC} and can converge with zero steady-state error if the Laplace transform of reference signals has a common monic denominator polynomial. Moreover, nonsmooth algorithms are developed in \cite{ChenTAC2012,ChenTAC2015} to accomplish finite-time consensus convergence.
%Different from the static average consensus that aims at reaching agreement on the average of static reference signals, the dynamic average consensus is to design an update law for each agent based on local interacting information such that all the agents can track the average of locally available time-varying reference signals.

In the aforementioned dynamic average consensus methods, each agent needs to share the explicit state value with its neighboring agents, which can breach the privacy of participating agents as the state values typically contain privacy-sensitive information such as its local reference signals. 
%Disclosing the state information via this way induces potential privacy problems. The state values contain private information regarding local reference signals, and thus the disclosure of state information might breach the privacy of participating agents. 
For example, an eavesdropper can wiretap the communication channel and infer privacy-sensitive information based on the exchanged signals. 
%Actually, if there exist eavesdroppers who want to steal information by wiretapping communication channels, exchanging agents' explicit state values through the communication network is vulnerable.
Considering the aforementioned issue and growing awareness of security, it is an urgent need to protect the agents' privacy in dynamic average consensus. 
%By now, several privacy-preserving strategies have been investigated for the static average consensus. One interesting idea is to inject carefully-designed perturbation signals into the state values of the agents, which masks private information and can obfuscate eavesdroppers \cite{MoTAC2017,Nozari2017AUTO,rezazadeh2019arxiv}. 
So far the privacy preservation schemes have been mainly focused on static average consensus while results on the privacy of dynamic average consensus are very sparse. For examples, studies in \cite{MoTAC2017,Nozari2017AUTO,rezazadeh2019arxiv} use carefully designed perturbation signals to obfuscate the states to protect the transmitted state information.
Another idea is to exploit cryptography to improve the resilience of static average consensus algorithms to privacy attacks \cite{Freris2016,RuanTAC2019,HadjicostisTAC2020}. In our prior work \cite{WangTAC2019}, a state decomposition method is developed for static average consensus, which can guarantee the privacy of all participating agents and is light-weight in calculation. 
%Note that most of the existing privacy-preserving strategies \cite{MoTAC2017,Nozari2017AUTO,rezazadeh2019arxiv,Freris2016,RuanTAC2019,HadjicostisTAC2020,WangTAC2019} are proposed for static average consensus, and few results focus on the privacy issue of dynamic average consensus. The static and dynamic average consensus algorithms have diverse agreement objectives and exploit different protocols to update agents' states. Therefore, the privacy preservation schemes for static average consensus cannot be directly applicable to the dynamic one. 
In this work, we extend the state decomposition technique to dynamic average consensus. Note that as dynamic average consensus involves time-varying reference signals, the privacy design in static average consensus cannot be directly applied \cite{Nozari2017AUTO,KiaCSM2019}. 
%The privacy design and analysis therein is more challenging as compared to those in static average consensus.  
The privacy design and analysis herein rely upon algebraic graph theory and Lyapunov-based control theory, which are more challenging as compared to those in \cite{WangTAC2019}. 

In this paper, we first illustrate the necessity of privacy protection by designing an attack model where an external eavesdropper can wiretap the reference signals when the agent states are updated following a conventional dynamic average consensus algorithm. We will then develop a state decomposition scheme to decompose the original state of each agent into two sub-states to achieve privacy preservation in dynamic average consensus. Specifically, one sub-state takes the role of the original state to interact with neighboring agents, while the other sub-state is invisible to the outside world and only exchanges information with the first sub-state. To ensure that the consensus algorithm with state decomposition retains the same average consensus results as the conventional method, the reference signals of the two sub-states are selected with their mean equal to the reference signals of the original state. We rigorously show that the proposed state decomposition scheme can protect the private reference signals from being inferred by the external eavesdropper. In addition, a case study on formation control of non-holonomic mobile robots is presented, which shows that the developed approach can be integrated with the tracking controller to accomplish privacy-preserving distributed control. Simulation results are given to validate the performance of the state decomposition scheme.

The remainder of this paper is organized as follows. Section \uppercase\expandafter{2} introduces a conventional dynamic average consensus algorithm and the privacy definition. A corresponding eavesdropping strategy is then presented in Section \uppercase\expandafter{3}. The state decomposition scheme for privacy preservation is developed in Section \uppercase\expandafter{4} and further exploited in Section \uppercase\expandafter{5} to accomplish the formation control of non-holonomic mobile robots.    
Simulation results are provided in Section \uppercase\expandafter{6}. Finally, the conclusion is given in Section \uppercase\expandafter{7}.

%The contributions of this paper include the following. Firstly, we propose a novel and practical resource allocation model for cloud-aided automotive systems that combines a high-level auction-based resource bidding with a low-level onboard optimization.  Secondly, we characterize the auction-based interplay as a multi-player game and establish the existence and uniqueness of the Nash equilibrium. Furthermore, two decentralized schemes that are secure and require little communication are proposed. Results on the convergence to the unique Nash equilibrium are derived. Last but not least, numerical simulations are presented to demonstrate the efficacy of the developed framework. In particular, we characterize the scheme's robustness to stochastic environment using total variance distance and statistically show their stochastic stability under random task arrival rate.

%The rest of the paper is organized as follows. In Section II, the resource allocation problem for cloud-aided automotive systems is described and a hierarchical allocation model is developed. In Section III, results on the existence and uniqueness of Nash equilibrium of the induced game are established. Dynamics of reaching the equilibrium is presented in Section IV, where synchronized and asynchronized update schemes are proposed with proved convergence to the unique Nash equilibrium. Numerical simulations are presented in Section V whereas Section VI concludes the paper.

\section{Preliminaries} \label{sec_preliminary}
\subsection{Dynamic Average Consensus}
We first review the dynamic average consensus problem.
%Suppose that there are $n$ time-varying reference signals $r_{i}(t) \in \mathbb{R}^{m}$, $i=1,2,\cdots, n$, satisfying the following dynamics $\dot{r}_{i}(t) = f_{i}(t)$. Suppose that there are $n$ agents with $x_{i}(t) \in \mathbb{R}^{m}$ being the state of agent $i$, $i=1,2,\cdots, n$. 
Consider $n$ agents where each has a time-varying reference signal $r_{i}(t) \in \mathbb{R}^{m}$, $i=1,2,\cdots, n$, satisfying the following dynamics: 
%$\dot{r}_{i}(t) = f_{i}(t)$.
\begin{equation} \label{dot_r}
\dot{r}_{i}(t) = f_{i}(t).
\end{equation}
In \eqref{dot_r} $f_{i}(t) \in \mathbb{R}^{m}$ is the derivative of the reference signal.
Each agent $i$ has access to $r_{i}(t)$, $f_{i}(t)$ as well as information from a subset of the other agents. This subset is referred to as the neighborhood of agent $i$ and denoted by $\mathcal{N}_{i}$. A graph $\mathcal{G} \triangleq \left\lbrace \mathcal{V}, \mathcal{E} \right\rbrace $ is used to describe the network topology between the agents, where $\mathcal{V} \triangleq \left\lbrace 1, 2, \cdots, n \right\rbrace$ is the node set and $\mathcal{E} \triangleq \left\lbrace (i, j) | i\in \mathcal{N}_{j}, j=1, 2, \cdots, n \right\rbrace$ is the edge set. We consider undirected graph where $j \in \mathcal{N}_{i}$ implies $i \in \mathcal{N}_{j}$. A graph is connected if and only if there is a path from any node to any other node. The adjacency matrix $A \triangleq \left[ a_{ij} \right] \in \mathbb{R}^{n\times n}$ of $\mathcal{G}$ is defined as follows: $a_{ij} =1$ if $(i, j)\in \mathcal{E}$; $a_{ij} =0$ otherwise. The degree of agent $i$ is $d_{i} \triangleq \sum_{j=1}^{n}a_{ij}$, and the degree matrix is given by $D \triangleq \text{diag}\left(d_{1}, \cdots, d_{n}\right) \in \mathbb{R}^{n\times n}$. The Laplacian matrix of $\mathcal{G}$ is then defined as $L \triangleq D-A\in \mathbb{R}^{n\times n}$.
%\begin{assumption}
%	The graph is undirected and connected.
%\end{assumption}
%\begin{assumption}
%	The signals $r_{i}(t)$, $f_{i}(t)$, and $\dot{f}_{i}(t)$ are bounded.
%\end{assumption}

In addition, each agent $i$ keeps an internal state $x_{i}(t) \in \mathbb{R}^{m}$, $i=1,2,\cdots, n$, and the objective of the dynamic average consensus is to design a distributed algorithm, such that all agents will finally track the average of the $n$ time-varying reference signals, i.e., $\| x_{i}(t)-\frac{1}{n}\sum_{j=1}^{n}r_{j}(t) \| \rightarrow 0 $ as $t \rightarrow \infty$. 
Assuming that the graph is undirected and connected, and the signals $r_{i}(t)$ and $f_{i}(t)$ are bounded, the following algorithm can be exploited to achieve the dynamic average consensus \cite{Spanos2005IFAC}:
%To achieve the dynamic average consensus, one of the commonly used update laws is given by \cite{Spanos2005IFAC}
\begin{equation} \label{dot_x}
\begin{aligned}
\dot{x}_{i}(t) &= f_{i}(t) + \kappa \sum_{j=1}^{n}a_{ij}\left( x_{j}(t) - x_{i}(t) \right),
\\
x_{i}(0) &= r_{i}(0), \quad \forall i\in \mathcal{V},
\end{aligned}
\end{equation}
where $\kappa \in \mathbb{R}$ is a positive constant. 
%Using a Laplace domain analysis, \cite{Spanos2005IFAC} shows that, if each input signal $r_{i}$, $i=1, 2, \cdots,n$, has a Laplace transform with all poles in the left half-plane and at most one zero pole (such signals are asymptotically constant), all the agents implementing algorithm \eqref{dot_x} over a connected graph track $\frac{1}{n}\sum_{j=1}^{n}r_{j}(t)$ with zero error asymptotically. As shown in \cite{KiaCSM2019}, the convergence properties of algorithm \eqref{dot_x} can be described more comprehensively using time-domain input-to-state stability analysis.
Using a time-domain analysis, \cite{KiaCSM2019} shows that if each input signal $f_{i}(t)$, $i=1, 2, \cdots,n$, is bounded, all the agents implementing algorithm \eqref{dot_x} over a connected graph are input-to-state stable and the tracking errors are ultimately bounded. Moreover, the convergence rate to the error bound is no worse than $\kappa \lambda_{2}(L)$, where $\lambda_{2}(L) \in \mathbb{R}$ is the second smallest eigenvalue of the Laplacian matrix $L$. 

\subsection{Privacy Definition}
As can be seen from \eqref{dot_x}, the conventional dynamic average consensus algorithm involves the exchange of states among neighboring agents, which can leak privacy-sensitive information such as the local reference signals. In this paper, we consider an eavesdropping attacker who knows the network topology and can wiretap communication channels and access exchanged information. Specifically, we consider the case where the eavesdropper is interested in obtaining the reference signals $r_i(t)$ and $f_i(t)$. 

Based on the above discussion, the definition of privacy in dynamic average consensus is given by:
\begin{definition}
	For a connected graph with $n$ agents, the privacy of the reference signals $r_i(t)$ and $f_i(t)$ from agent $i$ is preserved if an external eavesdropper cannot estimate the values of $r_i(t)$ and $f_i(t)$ with any accuracy.
\end{definition}

This privacy definition requires that an eavesdropping adversary cannot even approximately estimate (e.g., find a finite range for) the private signals and thus is more stringent than the privacy definition in \cite{Liu2006TKDE,Han2010TKDE} which defines privacy preservation as the inability of an adversary to {\it uniquely} determine the protected value. Furthermore, the privacy-preserving dynamic average consensus has potential applications in emerging distributed automated systems. For example, multiple power generators in the smart grid can exploit the consensus scheme to reach agreement on the average planned power generation of the whole network and keep their individual evolving generation plan secret since the generation plan is sensitive in bidding the right for energy selling \cite{Fang2012}. Another example is the formation control of multiple mobile robots, which will be detailed in Section \ref{sec_formation}. 

\section{Privacy Attack Model} \label{sec_eavesdrop}

%each agent updates its state by using local reference signals, and hence exchanging state values via the communication network has a high risk to reveal the private information regarding reference signals.  
%Suppose that an external eavesdropper is interested in obtaining the reference signals $r_{i}(t)$ and $f_{i}(t)$ of agent $i$. Specifically, the eavesdropper is an external attacker who knows the network topology and can wiretap communication channels and access exchanged information. 
We now show that the dynamic consensus algorithm \eqref{dot_x} is not privacy preserving, that is, the external eavesdropper can successfully obtain the reference signals $r_{i}(t)$ and $f_{i}(t)$ when agents follow the consensus algorithm in \eqref{dot_x}.

In particular, let $\hat{x}_{i}(t) \in \mathbb{R}^{m}$, $\hat{r}_{i}(t) \in \mathbb{R}^{m}$, and $\hat{f}_{i}(t) \in \mathbb{R}^{m}$ be the eavesdropper's estimates of $x_{i}(t)$, $r_{i}(t)$, and $f_{i}(t)$, respectively. An observer based attack model can be designed to estimate $r_{i}(t)$ and $f_{i}(t)$ as follows:
\begin{equation} \label{observer}
\begin{aligned}
\dot{\hat{x}}_{i}(t) &= \hat{f}_{i}(t) + \kappa \sum_{j=1}^{n}a_{ij}\left( x_{j}(t) - x_{i}(t) \right) + k_{1} \tilde{x}_{i}(t),
\\
\dot{\hat{r}}_{i}(t) &= k_{2}\left( x_{i}(t)-z_{i}(t)-\hat{r}_{i}(t) \right) + \hat{f}_{i}(t),
\\
\hat{f}_{i}(t) &= k_{3}x_{i}(t) + \hat{f}_{i}'(t),
\\
\dot{\hat{f}}_{i}'(t) &= -k_{3}\left( \hat{f}_{i}(t) + \kappa \sum_{j=1}^{n}a_{ij}\left( x_{j}(t) - x_{i}(t) \right) \right) + k_{4}\tilde{x}_{i}(t),
\end{aligned}
\end{equation}
where $k_{1}$, $k_{2}$, $k_{3}$, $k_{4} \in \mathbb{R}$ are positive constants to be designed, $\tilde{x}_{i}(t) \triangleq x_{i}(t)-\hat{x}_{i}(t) \in \mathbb{R}^{m}$ is the estimation error, $\hat{f}_{i}'(t) \in \mathbb{R}^{m}$ is an auxiliary variable, and $z_{i}(t) \in \mathbb{R}^{m}$ is the local filter updated by
\begin{equation} \label{dot_z}
\begin{aligned}
\dot{z}_{i}(t) &= \kappa \sum_{j=1}^{n}a_{ij}\left( x_{j}(t) - x_{i}(t) \right),
\\
z_{i}(0) &= 0. 
\end{aligned}
\end{equation}

\begin{theorem}
	When the algorithm in \eqref{dot_x} is utilized to achieve dynamic average consensus, the external eavesdropper can infer the reference signals $r_{i}(t)$ and $f_{i}(t)$ by using the observer in \eqref{observer}. More precisely, assuming that the signals $r_{i}(t)$, $f_{i}(t)$, and $\dot{f}_{i}(t)$ are bounded, i.e., $r_{i}(t)$, $f_{i}(t)$, $\dot{f}_{i}(t) \in \mathcal{L}_{\infty}$, then
	%$\left. 1\right)$ \eqref{observer} ensures that the estimation errors $\tilde{r}_{i}(t) \triangleq r_{i}(t)-\hat{r}_{i}(t)$, $\tilde{f}_{i}(t) \triangleq f_{i}(t)-\hat{f}_{i}(t) \in \mathbb{R}^{m}$ are uniformly ultimately bounded (UUB). $\left. 2\right)$ Moreover, if $\dot{f}_{i}(t)$ is in the $\mathcal{L}_{2}$-space, the estimation errors $\tilde{r}_{i}(t)$ and $\tilde{f}_{i}(t)$ converge to zero asymptotically.
	\begin{enumerate}
		\item The attacker using \eqref{observer} is guaranteed to obtain the private reference information in the sense that the estimation errors $\tilde{r}_{i}(t) \triangleq r_{i}(t)-\hat{r}_{i}(t)$, $\tilde{f}_{i}(t) \triangleq f_{i}(t)-\hat{f}_{i}(t) \in \mathbb{R}^{m}$ are uniformly ultimately bounded (UUB).
		\item If $\dot{f}_{i}(t)$ is in the $\mathcal{L}_{2}$-space, the estimation errors $\tilde{r}_{i}(t)$ and $\tilde{f}_{i}(t)$ converge to zero asymptotically.
	\end{enumerate}
\end{theorem}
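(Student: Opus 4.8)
The plan is to exploit the special structure of the local filter \eqref{dot_z} to recover $r_i(t)$ exactly, and then to show that the remaining observer equations form a Hurwitz linear error system forced only by $\dot f_i(t)$. The cornerstone is an exact reconstruction identity. Since $z_i$ is built purely from the observable consensus term $\kappa\sum_{j}a_{ij}(x_j-x_i)$, which is precisely the non-$f_i$ part of $\dot x_i$ in \eqref{dot_x}, subtracting \eqref{dot_z} from \eqref{dot_x} gives $\frac{d}{dt}(x_i-z_i)=f_i=\dot r_i$; combined with $x_i(0)-z_i(0)=r_i(0)$ this integrates to $x_i(t)-z_i(t)=r_i(t)$ for all $t$. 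Hence the quantity $x_i-z_i-\hat r_i$ appearing in the $\dot{\hat r}_i$ equation is exactly $\tilde r_i$.

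Next I would differentiate the three estimation errors and substitute \eqref{observer}, the observer being arranged so that every observable consensus term cancels. I expect to arrive at the triangular error system
\begin{equation*}
\dot{\tilde x}_i=-k_1\tilde x_i+\tilde f_i,\quad \dot{\tilde f}_i=-k_4\tilde x_i-k_3\tilde f_i+\dot f_i,\quad \dot{\tilde r}_i=-k_2\tilde r_i+\tilde f_i,
\end{equation*}
where the identity above is what puts $\dot{\tilde r}_i$ into this clean form, and the construction $\hat f_i=k_3 x_i+\hat f_i'$ together with the $\dot{\hat f}_i'$ equation is what removes $\dot x_i$ (hence the unknown $f_i$) from $\dot{\hat f}_i$, leaving only $\tilde f_i$ and $\tilde x_i$.

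The $(\tilde x_i,\tilde f_i)$ pair is autonomous apart from the forcing $\dot f_i$, with system matrix $M=\left[\begin{smallmatrix}-k_1 & 1\\ -k_4 & -k_3\end{smallmatrix}\right]$. Since $\mathrm{tr}(M)=-(k_1+k_3)<0$ and $\det(M)=k_1k_3+k_4>0$ for positive gains, $M$ is Hurwitz for any positive $k_1,k_3,k_4$. For Part~1, a Hurwitz linear system driven by the bounded input $\dot f_i\in\mathcal L_\infty$ is ISS, so $\tilde x_i,\tilde f_i$ are UUB; the scalar equation for $\tilde r_i$ is then a stable first-order filter driven by the UUB signal $\tilde f_i$, so $\tilde r_i$ is UUB as well, the overall system being a cascade $(\tilde x_i,\tilde f_i)\to\tilde r_i$ along which boundedness propagates.

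For Part~2, with $\dot f_i\in\mathcal L_2\cap\mathcal L_\infty$, the exponentially stable $M$ maps this input to an $\mathcal L_2\cap\mathcal L_\infty$ state, so $\tilde x_i,\tilde f_i\in\mathcal L_2\cap\mathcal L_\infty$ with bounded derivatives; Barbalat's lemma then gives $\tilde x_i,\tilde f_i\to 0$, and feeding the $\mathcal L_2\cap\mathcal L_\infty$ signal $\tilde f_i$ into the stable $\tilde r_i$ filter and invoking Barbalat's lemma once more yields $\tilde r_i\to 0$. The main obstacle I anticipate is the bookkeeping in the first two steps: spotting the exact reconstruction $x_i-z_i=r_i$, and verifying that the auxiliary variable $\hat f_i'$ is chosen precisely so that $f_i$ cancels out of $\dot{\hat f}_i$. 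Once the error dynamics are in the triangular form above, the remaining stability analysis is routine linear-systems and Barbalat-lemma reasoning.
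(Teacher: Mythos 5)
Your proposal is correct, and it takes a genuinely different route from the paper's own proof. Both arguments rest on the two structural facts you make explicit: the exact reconstruction identity $x_i(t)-z_i(t)=r_i(t)$ (obtained by subtracting \eqref{dot_z} from \eqref{dot_x} and matching initial conditions; the paper uses this only implicitly when it computes the $\tilde r_i$ term of its Lyapunov derivative), and the cancellation of all consensus terms, which yields exactly your triangular linear error dynamics $\dot{\tilde x}_i=-k_1\tilde x_i+\tilde f_i$, $\dot{\tilde f}_i=-k_4\tilde x_i-k_3\tilde f_i+\dot f_i$, $\dot{\tilde r}_i=-k_2\tilde r_i+\tilde f_i$ — these are consistent, term by term, with the expression for $\dot V$ in the paper. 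From there the paths diverge: the paper forms the weighted Lyapunov function $V=\frac{1}{2}k_4\tilde x_i^T\tilde x_i+\frac{1}{2}\tilde r_i^T\tilde r_i+\frac{1}{2}\tilde f_i^T\tilde f_i$ (the weight $k_4$ is what cancels the $\tilde x_i$--$\tilde f_i$ cross terms), bounds the remaining cross term $\tilde r_i^T\tilde f_i$ by Young's inequality — which is precisely what forces its gain condition $k_3>1/k_2$ — and invokes Theorem 4.18 of Khalil for UUB; for Part 2 it augments $V$ with the tail of $\int \frac{1}{2k_3}\dot f_i^T\dot f_i$ so the augmented function is nonincreasing, then applies Barbalat. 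You instead exploit linearity directly: the $(\tilde x_i,\tilde f_i)$ block is LTI with Hurwitz matrix $M$ (trace/determinant test), UUB follows from ISS of Hurwitz LTI systems under bounded input, and $\tilde r_i$ is a stable first-order filter in cascade; Part 2 follows from the fact that an exponentially stable LTI system maps $\mathcal{L}_2\cap\mathcal{L}_\infty$ inputs to $\mathcal{L}_2\cap\mathcal{L}_\infty$ states with bounded derivatives, plus Barbalat, applied once to the block and once to the cascade. What your route buys: the conclusion holds for \emph{all} positive gains, exposing the paper's condition $k_3>1/k_2$ as an artifact of its particular Lyapunov function rather than something intrinsic, and the ISS/convolution estimates give explicit ultimate bounds in terms of $\sup_t\|\dot f_i(t)\|$. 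What the paper's route buys: it is self-contained Lyapunov reasoning of the same style used in its Appendix A, requires no LTI-specific input-output theory, and would extend to perturbed or nonlinear error dynamics where your eigenvalue and transfer-function arguments no longer apply.
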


\begin{proof}
	To prove the first claim, a non-negative Lyapunov function $V(t) \in \mathbb{R}$ is introduced, as follows:
	\begin{equation} \label{V}
	V(t) \triangleq \frac{1}{2}k_{4}\tilde{x}_{i}^{T}(t)\tilde{x}_{i}(t) + \frac{1}{2} \tilde{r}_{i}^{T}(t)\tilde{r}_{i}(t) + \frac{1}{2} \tilde{f}_{i}^{T}(t)\tilde{f}_{i}(t),
	\end{equation}
	from which it follows that $V(t)$ can be bounded by
	\begin{equation} \label{bound_V}
	\underline{\mu}y^{T}(t)y(t) \le V(t) \le \overline{\mu}y^{T}(t)y(t),
	\end{equation}
	where $\underline{\mu} \triangleq \min\left\lbrace \frac{1}{2}k_{4}, \frac{1}{2} \right\rbrace$, $\overline{\mu} \triangleq \max \left\lbrace \frac{1}{2}k_{4}, \frac{1}{2} \right\rbrace \in \mathbb{R}$, and $y(t) \triangleq \begin{bmatrix}
	\tilde{x}_{i}^{T}(t) & \tilde{r}_{i}^{T}(t) & \tilde{f}_{i}^{T}(t)
	\end{bmatrix}^{T} \in \mathbb{R}^{3m}$ is the augmented estimate vector.
	Taking the time derivative of \eqref{V} and substituting it in \eqref{dot_x}-\eqref{dot_z} yield
	\begin{equation} \label{dot_V}
	\begin{aligned}
	\dot{V}(t) &= k_{4}\tilde{x}_{i}^{T}(t) \dot{\tilde{x}}_{i}(t) + \tilde{r}_{i}^{T}(t)\dot{\tilde{r}}_{i}(t) + \tilde{f}_{i}^{T}(t)\dot{\tilde{f}}_{i}(t)
	\\
	&= -k_{1}k_{4}\tilde{x}_{i}^{T}(t)\tilde{x}_{i}(t) - k_{2}\tilde{r}_{i}^{T}(t)\tilde{r}_{i}(t) - k_{3}\tilde{f}_{i}^{T}(t)\tilde{f}_{i}(t) 
	\\
	& \quad \, + \tilde{r}_{i}^{T}(t)\tilde{f}_{i}(t) + \tilde{f}_{i}^{T}(t)\dot{f}_{i}(t)
	\\
	&\le -k_{1}k_{4}\tilde{x}_{i}^{T}(t)\tilde{x}_{i}(t) - \frac{k_{2}}{2}\tilde{r}_{i}^{T}(t)\tilde{r}_{i}(t)  
	\\
	& \quad \, - \left( \frac{k_{3}}{2}-\frac{1}{2k_{2}} \right)\tilde{f}_{i}^{T}(t)\tilde{f}_{i}(t) + \frac{1}{2k_{3}} \dot{f}_{i}^{T}(t)\dot{f}_{i}(t)
	\\
	&\le -\mu y^{T}(t)y(t) + \varrho(t),
	\end{aligned}
	\end{equation}
	where $\mu \triangleq \min \left\lbrace k_{1}k_{4}, \frac{k_{2}}{2}, \frac{k_{3}}{2}-\frac{1}{2k_{2}} \right\rbrace \in \mathbb{R}$ and $\varrho(t) \triangleq \frac{1}{2k_{3}} \dot{f}_{i}^{T}(t)\dot{f}_{i}(t) \in \mathbb{R}$. It is clear that $\mu$ is positive provided that $k_{2}$ and $k_{3}$ are chosen to satisfy $k_{3}>\frac{1}{k_{2}}$. Since $\dot{f}_{i}(t) \in \mathcal{L}_{\infty}$, $\varrho(t)$ is bounded.
	By utilizing \eqref{bound_V} and \eqref{dot_V}, Theorem 4.18 in \cite{khalil2002nonlinear} can be invoked to show that $y(t)$, i.e., $\tilde{x}_{i}(t)$, $\tilde{r}_{i}(t)$ and $\tilde{f}_{i}(t)$, is UUB.
	
	We now prove the second claim. Based on the assumption $\dot{f}_{i}(t) \in \mathcal{L}_{2}$, it can be obtained that there exists a bounded positive constant $\iota \in \mathbb{R}$ such that $\forall t\ge 0$,
	\[
	\int_{0}^{t} \frac{1}{2k_{3}} \dot{f}_{i}^{T}(\tau)\dot{f}_{i}(\tau)d\tau \le \iota.
	\]
	Let the non-negative function $W(t) \in \mathbb{R}$ be defined as
	\begin{equation} \label{W}
	W(t) \triangleq V(t) + \iota - \int_{0}^{t} \frac{1}{2k_{3}} \dot{f}_{i}^{T}(\tau)\dot{f}_{i}(\tau)d\tau.
	\end{equation}
	Taking the time derivative of \eqref{W} and utilizing \eqref{dot_V}, it can be concluded that
	\begin{equation} \label{dot_W}
	\begin{aligned}
	\dot{W}(t) &= -k_{1}k_{4}\tilde{x}_{i}^{T}(t)\tilde{x}_{i}(t) - \frac{k_{2}}{2}\tilde{r}_{i}^{T}(t)\tilde{r}_{i}(t) 
	\\
	& \quad \, 
	- \left( \frac{k_{3}}{2}-\frac{1}{2k_{2}} \right)\tilde{f}_{i}^{T}(t)\tilde{f}_{i}(t) \le 0.
	\end{aligned}
	\end{equation}
	According to \eqref{W} and \eqref{dot_W}, it follows that $W(t) \in \mathcal{L}_{\infty}$, i.e., $\tilde{x}_{i}(t)$, $\tilde{r}_{i}(t)$ $\tilde{f}_{i}(t) \in \mathcal{L}_{\infty} \bigcap \mathcal{L}_{2}$. The boundedness of $r_{i}(t)$, $f_{i}(t)$, $\dot{f}_{i}(t)$ and the expression in \eqref{observer} can be used to conclude that $\dot{\tilde{x}}_{i}(t)$, $\dot{\tilde{r}}_{i}(t)$, $\dot{\tilde{f}}_{i}(t) \in \mathcal{L}_{\infty}$. As $\tilde{x}_{i}(t)$, $\tilde{r}_{i}(t)$, $\tilde{f}_{i}(t) \in \mathcal{L}_{\infty} \bigcap \mathcal{L}_{2}$ and $\dot{\tilde{x}}_{i}(t)$, $\dot{\tilde{r}}_{i}(t)$, $\dot{\tilde{f}}_{i}(t) \in \mathcal{L}_{\infty}$, Barbalat's lemma \cite{khalil2002nonlinear} can be used to conclude that $\tilde{x}_{i}(t)$, $\tilde{r}_{i}(t)$ and $\tilde{f}_{i}(t)$ converge to zero asymptotically.
\end{proof}

\begin{remark}
	The design of the observer in \eqref{observer} is to illustrate that the consensus algorithm in \eqref{dot_x} is vulnerable to privacy attacks. Various techniques can be exploited to construct the attack model, and it is not the focus of this paper. In the following, we will present a privacy scheme and show that the approach can provide privacy protection against the external eavesdropper no matter what attack model is used.
\end{remark}

\section{Privacy Preservation via State Decomposition} \label{sec_privacy}
%Inspired by the work \cite{WangTAC2019}, a state decomposition scheme is proposed to provide protection against an external eavesdropper. 
Building upon our prior work \cite{WangTAC2019}, in this section, we extend the state decomposition scheme to dynamic average consensus. Note that here we aim at protecting privacy against \emph{any} eavesdropping scheme (including the example shown in Section \ref{sec_eavesdrop}).
More specifically, the state decomposition scheme decomposes the state and reference signals $\left\lbrace x_{i}(t), r_{i}(t), f_{i}(t)\right\rbrace$ of each agent into two sub-sets $\left\lbrace x_{i}^{\alpha}(t), r_{i}^{\alpha}(t), f_{i}^{\alpha}(t) \right\rbrace$ and $\left\lbrace x_{i}^{\beta}(t), r_{i}^{\beta}(t), f_{i}^{\beta}(t) \right\rbrace$. The initial values $r_{i}^{\alpha}(0)$ and $r_{i}^{\beta}(0)$ can be randomly chosen from the set of all real numbers under the following constraint:
\begin{equation} \label{r_alpha_beta}
\begin{aligned}
r_{i}^{\alpha}(0) + r_{i}^{\beta}(0) = 2r_{i}(0).
%&x_{i}^{\alpha}(0) + x_{i}^{\beta}(0) = 2x_{i}(0),
%\\
%&x_{i}^{\alpha}(0) = r_{i}^{\alpha}(0), \quad x_{i}^{\beta}(0) = r_{i}^{\beta}(0).
\end{aligned}
\end{equation}
Furthermore, $f_{i}^{\alpha}(t)$ and $f_{i}^{\beta}(t)$ are bounded and chosen to satisfy:
\begin{equation} \label{f_alpha_beta}
f_{i}^{\alpha}(t) + f_{i}^{\beta}(t) = 2f_{i}(t).
\end{equation}
In this decomposition mechanism, the sub-state $x_{i}^{\alpha}(t)$ takes the role of the original state $x_{i}(t)$ in inter-agent interactions and is the only state value from agent $i$ that will be shared with its neighbors. The other sub-state $x_{i}^{\beta}(t)$ also involves in the distributed updates by (and only by) exchanging information with $x_{i}^{\alpha}(t)$. Therefore, $x_{i}^{\beta}(t)$ will affect the evolution of $x_{i}^{\alpha}(t)$, but the existence of $x_{i}^{\beta}(t)$ is invisible to neighbors of agent $i$ and the eavesdropper. An example is given in Figure~\ref{fig_network} to illustrate the state decomposition of a network with four agents. 
\iffalse
\begin{figure*}[!htbp]
	\centering
	\begin{subfigure}[b]{0.5\textwidth}
		\centering
		\includegraphics[height=1.8 in]{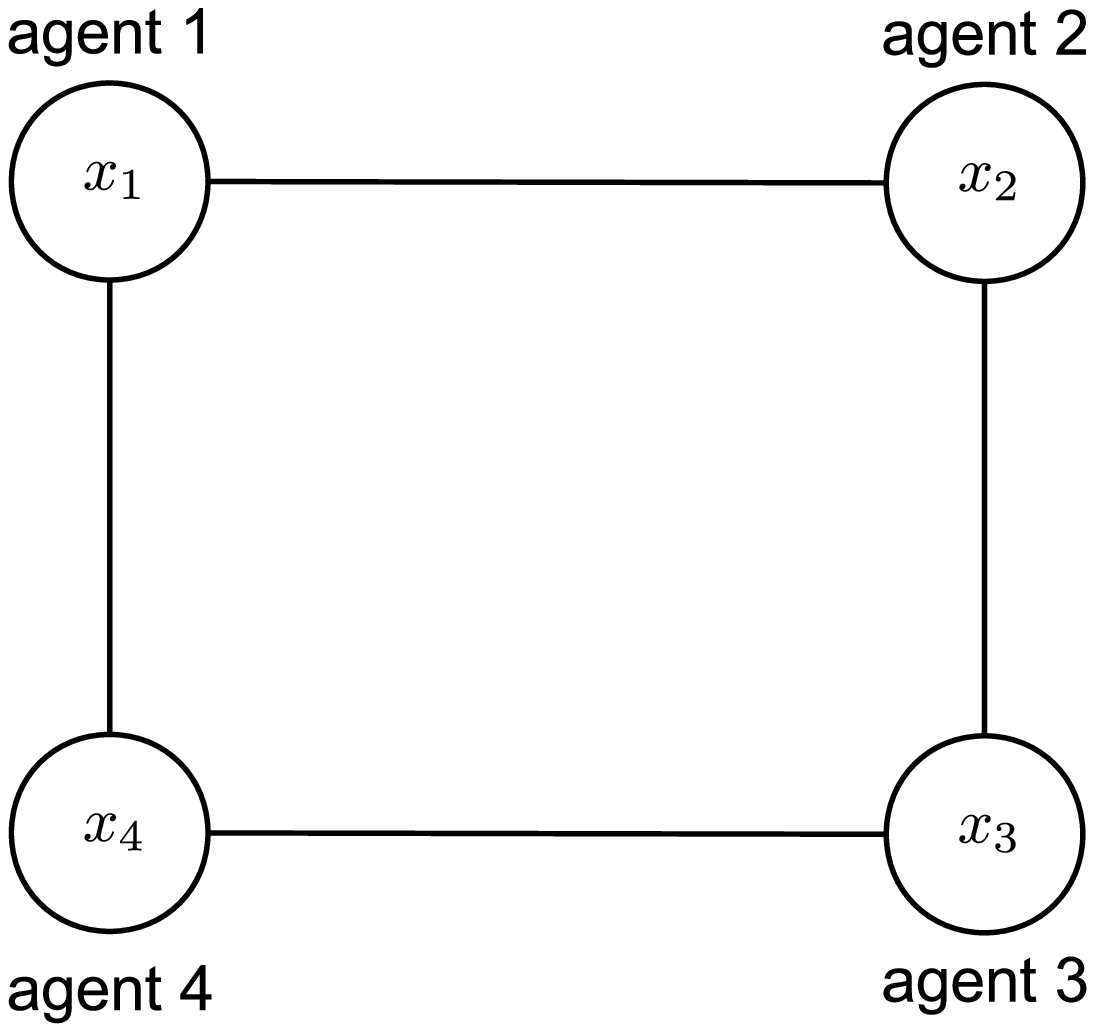}
		\caption{}
		\label{fig_network1}
	\end{subfigure}%\hfill
	\begin{subfigure}[b]{0.5\textwidth}
		\centering
		\includegraphics[height=1.8 in]{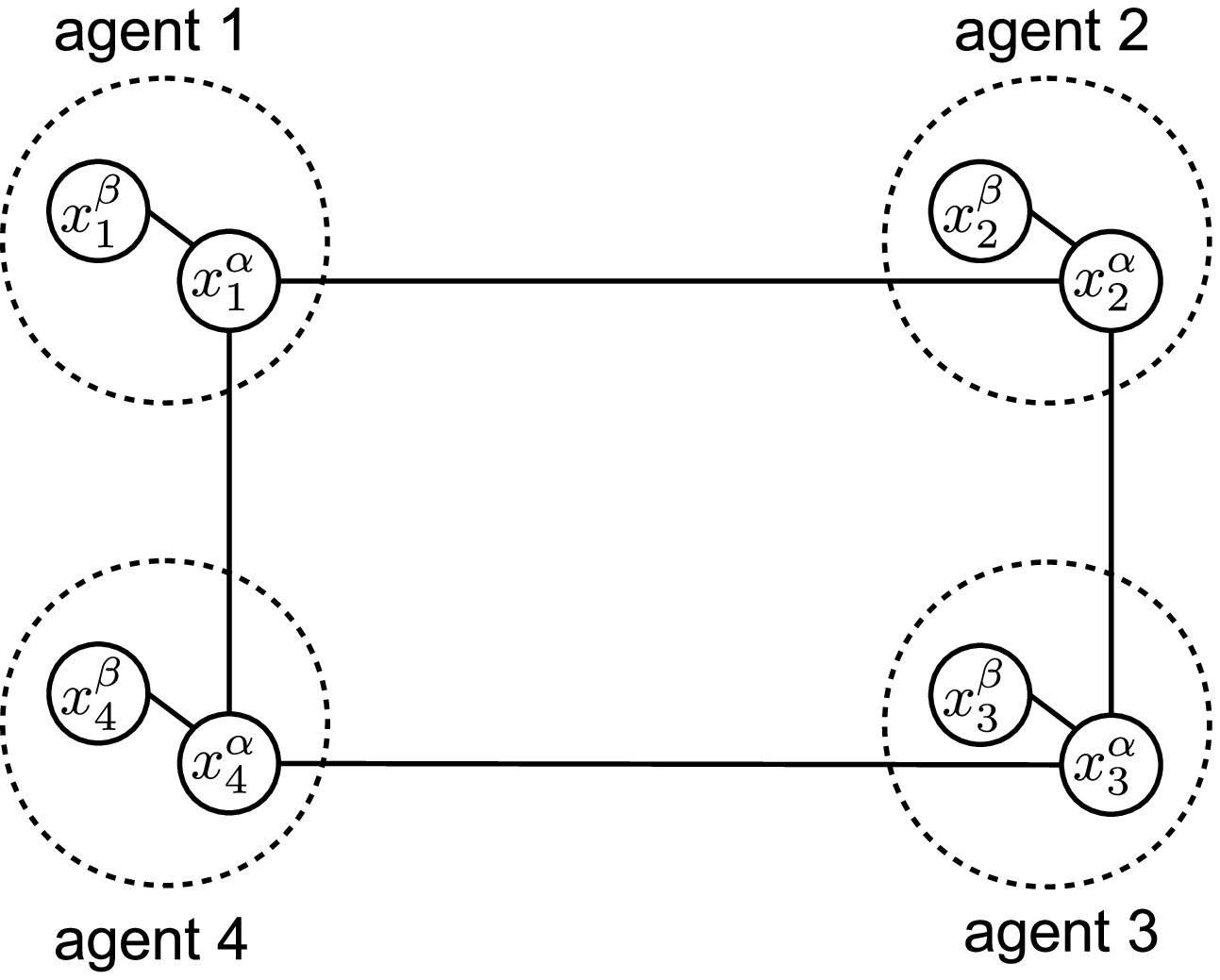}
		\caption{}
		\label{fig_network2}
	\end{subfigure}
	\caption{State decomposition: (a) Before state decomposition. (b) After state decomposition.}
	\label{fig_network}
\end{figure*}
\fi
\begin{figure}[!t]
	\centering
	\begin{subfigure}[b]{0.25\textwidth}
		\centering
		\includegraphics[height=1.6 in]{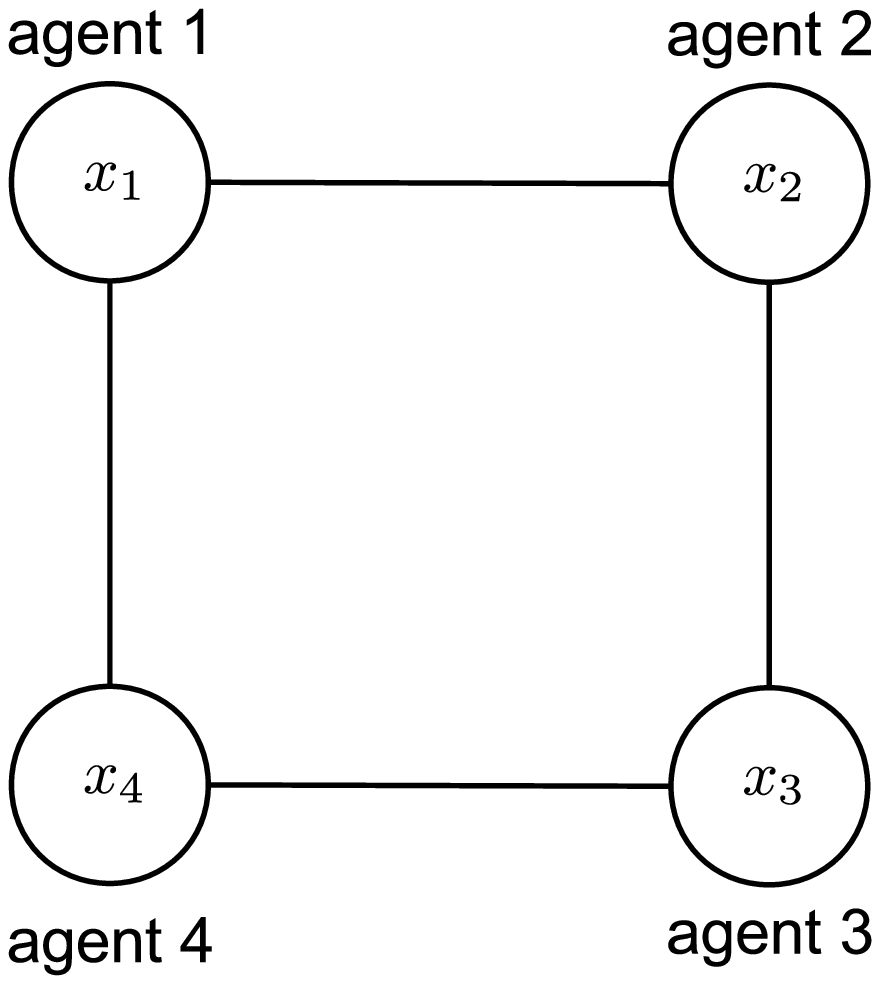}
		\caption{}
		\label{fig_network1}
	\end{subfigure}%\hfill
	\begin{subfigure}[b]{0.25\textwidth}
		\centering
		\includegraphics[height=1.6 in]{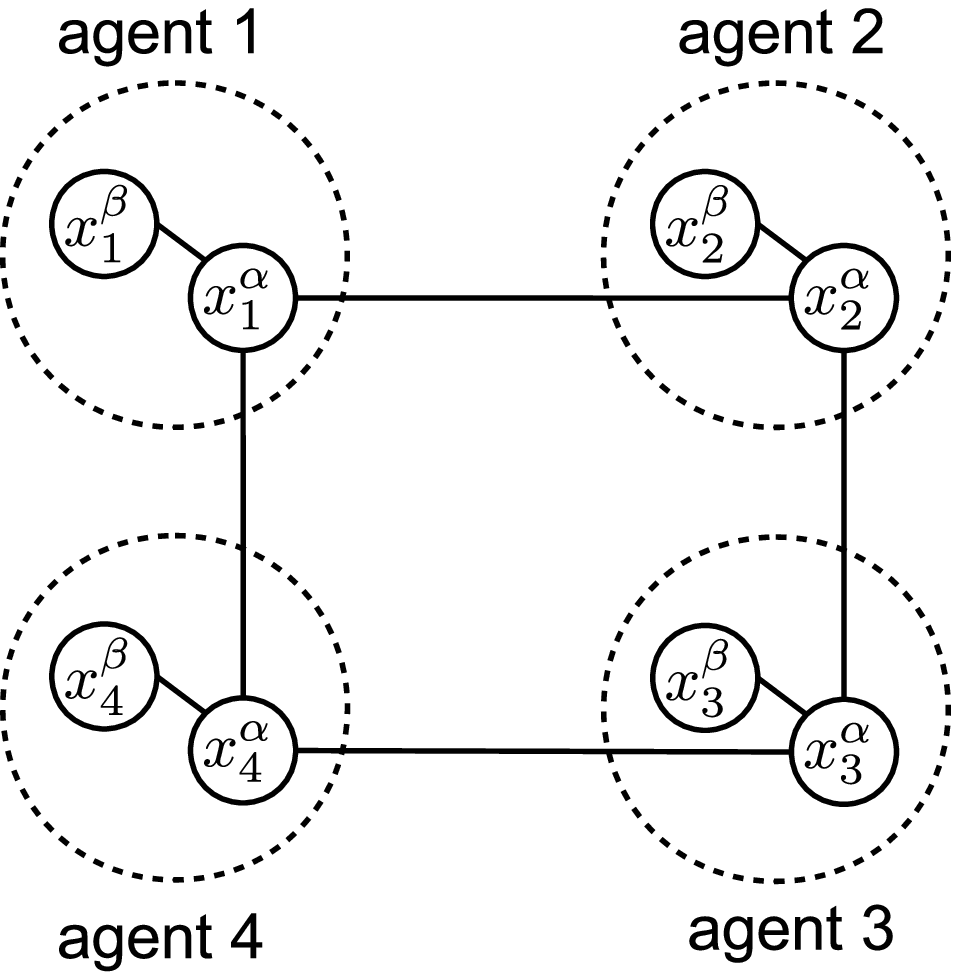}
		\caption{}
		\label{fig_network2}
	\end{subfigure}
	\caption{State decomposition: (a) Before state decomposition. (b) After state decomposition.}
	\label{fig_network}
\end{figure}

Under the decomposition mechanism, the conventional average consensus algorithm in \eqref{dot_x} becomes
\begin{equation} \label{sub_x}
\begin{aligned}
\dot{x}_{i}^{\alpha}(t) &= f_{i}^{\alpha}(t) + \kappa \sum_{j=1}^{n}a_{ij}\left( x_{j}^{\alpha}(t) - x_{i}^{\alpha}(t) \right) 
\\
& \qquad \quad \;\;\, + \kappa \left( x_{i}^{\beta}(t) - x_{i}^{\alpha}(t) \right),
\\
\dot{x}_{i}^{\beta}(t) &= f_{i}^{\beta}(t) + \kappa \left( x_{i}^{\alpha}(t) - x_{i}^{\beta}(t) \right),
\\
x_{i}^{\alpha}(0) &= r_{i}^{\alpha}(0), \quad x_{i}^{\beta}(0) = r_{i}^{\beta}(0), \quad \forall i \in \mathcal{V}.
\end{aligned}
\end{equation}
In the following, we first show that all states $x_{i}^{\alpha}(t)$ and $x_{i}^{\beta}(t)$ will present similar convergence properties as in the conventional case \eqref{dot_x}. Then, we prove that under the decomposition mechanism, the privacy of each agent is protected against an external eavesdropper.

\begin{theorem} \label{theorem_convergence}
	Under the decomposition mechanism, all sub-states $x_{i}^{\alpha}(t)$ and $x_{i}^{\beta}(t)$ in \eqref{sub_x} are input-to-state stable, and the tracking errors $x_{i}^{\alpha}(t)-\frac{1}{n}\sum_{j=1}^{n}r_{j}(t)$ and $x_{i}^{\beta}(t)-\frac{1}{n}\sum_{j=1}^{n}r_{j}(t)$ are ultimately bounded. Moreover, the convergence rate of the tracking errors is no worse than $\frac{\kappa}{2}\left( \lambda_{2}(L)+2-\sqrt{\lambda_{2}^{2}(L)+4} \right)$, where $L$ is the Laplacian matrix of graph before state decomposition and $\lambda_{2}(L) \in \mathbb{R}$ is the second smallest eigenvalue of $L$.
\end{theorem}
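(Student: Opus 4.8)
The plan is to recognize that the decomposed dynamics \eqref{sub_x} constitute an ordinary dynamic average consensus system of the form \eqref{dot_x}, but on an enlarged graph with $2n$ nodes. Once this is established, the input-to-state stability and ultimate boundedness claims follow verbatim from the time-domain result quoted earlier from \cite{KiaCSM2019}, and the only genuinely new computation is the spectral gap of the enlarged Laplacian, which produces the stated rate.

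First I would stack the sub-states as $X \triangleq \left[ (x_{1}^{\alpha})^{T}, \cdots, (x_{n}^{\alpha})^{T}, (x_{1}^{\beta})^{T}, \cdots, (x_{n}^{\beta})^{T} \right]^{T}$ and rewrite \eqref{sub_x} compactly as $\dot{X} = F - \kappa (L' \otimes I_{m}) X$, where $F$ collects the inputs $f_{i}^{\alpha}, f_{i}^{\beta}$ and the enlarged Laplacian has the block form
\[
L' = \begin{bmatrix} L + I_{n} & -I_{n} \\ -I_{n} & I_{n} \end{bmatrix}.
\]
I would check that $L'$ has zero row sums and nonpositive off-diagonal entries, hence is a valid graph Laplacian, and argue that the associated graph $\mathcal{G}'$ is connected: the $\alpha$-nodes inherit the connected topology of $\mathcal{G}$, while each $\beta$-node is joined to its partner $\alpha$-node through the self-link $\kappa(x_{i}^{\alpha}-x_{i}^{\beta})$. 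Since $f_{i}^{\alpha}(t)$ and $f_{i}^{\beta}(t)$ are bounded, the cited result then guarantees that every node of this connected $2n$-node graph is ISS with ultimately bounded tracking error—provided I identify the tracked signal correctly.

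Next I would verify that the average is preserved. Since $\dot{r}_{i}^{\alpha} = f_{i}^{\alpha}$ and $\dot{r}_{i}^{\beta} = f_{i}^{\beta}$, constraints \eqref{r_alpha_beta} and \eqref{f_alpha_beta} integrate to $r_{i}^{\alpha}(t) + r_{i}^{\beta}(t) = 2 r_{i}(t)$ for all $t$, so the average of the $2n$ enlarged references satisfies $\frac{1}{2n}\sum_{i=1}^{n}\left( r_{i}^{\alpha}(t) + r_{i}^{\beta}(t) \right) = \frac{1}{n}\sum_{i=1}^{n} r_{i}(t)$. Hence both $x_{i}^{\alpha}(t)$ and $x_{i}^{\beta}(t)$ track exactly the intended quantity $\frac{1}{n}\sum_{j=1}^{n} r_{j}(t)$, which establishes the ultimate-boundedness statement for the two families of tracking errors.

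The main obstacle—and the only real calculation—is the convergence rate, which equals $\kappa$ times the second smallest eigenvalue $\lambda_{2}(L')$. Exploiting the block structure, I would seek eigenvectors of $L'$ of the form $\left[ a v^{T}, b v^{T} \right]^{T}$ with $v$ an eigenvector of $L$ for eigenvalue $\lambda_{k}$; this reduces the spectrum of $L'$ to the eigenvalues of the $2\times 2$ matrices $\begin{bmatrix} \lambda_{k}+1 & -1 \\ -1 & 1 \end{bmatrix}$, namely $\mu_{\pm}(\lambda_{k}) = \tfrac{1}{2}\big( \lambda_{k}+2 \pm \sqrt{\lambda_{k}^{2}+4} \big)$. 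The zero eigenvalue of $L$ yields $\{0, 2\}$ (the $0$ being the consensus mode spanned by the all-ones vector of $\mathbb{R}^{2n}$), and I would then show that $\mu_{-}(\lambda)$ is strictly increasing in $\lambda$ and bounded above by $1$. Consequently the smallest positive eigenvalue of $L'$ is attained at $\lambda_{2}(L)$ and equals $\mu_{-}(\lambda_{2}(L)) = \tfrac{1}{2}\big( \lambda_{2}(L) + 2 - \sqrt{\lambda_{2}^{2}(L)+4} \big)$, which is smaller than the value $2$ coming from the consensus block. This gives the stated rate $\frac{\kappa}{2}\big( \lambda_{2}(L)+2-\sqrt{\lambda_{2}^{2}(L)+4} \big)$. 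The monotonicity-and-boundedness argument that pins down which branch and which $\lambda_{k}$ realizes the spectral gap is the crux; everything else is a direct application of the already-established single-graph consensus theory.
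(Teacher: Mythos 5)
Your proof is correct and takes essentially the same route as the paper's: treat the decomposed system as a standard dynamic average consensus on the connected $2n$-node graph, invoke the cited result of \cite{KiaCSM2019} for input-to-state stability and ultimate boundedness, use the integrated constraints \eqref{r_alpha_beta}--\eqref{f_alpha_beta} to identify the tracked average, and compute the spectrum of the block Laplacian \eqref{L_alpha_beta}. Your spectral step is in fact more complete than the paper's: the monotonicity of $\mu_{-}(\lambda)$ together with the bounds $\mu_{-}(\lambda)<1<2\le\mu_{+}(\lambda)$, which pin down the second smallest eigenvalue as the minus branch evaluated at $\lambda_{2}(L)$, is precisely the justification the paper leaves implicit when passing from \eqref{lambda_L} to the stated value of $\lambda_{2}(L^{\alpha\beta})$.
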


\begin{proof}
	It is clear that the decomposition mechanism ensures that all sub-states also compose a connected graph. 
	Based on the results in \cite{KiaCSM2019}, dynamic average consensus can still be achieved, i.e., all sub-states are input-to-state stable, and the convergence errors $x_{i}^{\alpha}(t)-\frac{1}{2n}\sum_{j=1}^{n} \left(r_{j}^{\alpha}(t) + r_{j}^{\beta}(t)\right)$ and $x_{i}^{\beta}(t)-\frac{1}{2n}\sum_{j=1}^{n} \left(r_{j}^{\alpha}(t) + r_{j}^{\beta}(t)\right)$ are ultimately bounded. 
	%In addition, the convergence rate to the error bound is no worse than $\kappa \lambda_{2}(L^{\alpha\beta})$, where $L^{\alpha\beta} \in \mathbb{R}^{2n \times 2n}$ is the Laplacian matrix of the graph composed by all sub-states and $\lambda_{2}(L^{\alpha\beta}) \in \mathbb{R}$ is the second smallest eigenvalue of $L^{\alpha\beta}$.
	%To complete the proof of Theorem \ref{theorem_convergence}, we next show that $\frac{1}{2n}\sum_{j=1}^{n} \left(r_{j}^{\alpha}(t) + r_{j}^{\beta}(t)\right) = \frac{1}{n}\sum_{j=1}^{n}r_{j}(t)$ and $\lambda_{2}(L^{\alpha\beta}) = \frac{1}{2}\left( \lambda_{2}(L)+2-\sqrt{\lambda_{2}^{2}(L)+4} \right)$.
	It can be obtained from \eqref{r_alpha_beta} and \eqref{f_alpha_beta} that 
	\begin{equation}
	r_{i}^{\alpha}(t) + r_{i}^{\beta}(t) = 2r_{i}(t),
	\end{equation}
	which implies $\frac{1}{2n}\sum_{j=1}^{n} \left(r_{j}^{\alpha}(t) + r_{j}^{\beta}(t)\right) = \frac{1}{n}\sum_{j=1}^{n}r_{j}(t)$. Therefore, all sub-states $x_{i}^{\alpha}(t)$ and $x_{i}^{\beta}(t)$ in \eqref{sub_x} retain the convergence to the neighborhood of the same average consensus value as the original states. %will converge to the dynamic average consensus value $\frac{1}{n}\sum_{j=1}^{n}r_{j}(t)$.
	
	Let $L^{\alpha\beta} \in \mathbb{R}^{2n \times 2n}$ be the Laplacian matrix of the graph composed by all sub-states. By leveraging the results in \cite{KiaCSM2019}, it can be concluded that the convergence rate of the consensus algorithm is no worse than $\kappa \lambda_{2}(L^{\alpha\beta})$, where $\lambda_{2}(L^{\alpha\beta}) \in \mathbb{R}$ is the second smallest eigenvalue of $L^{\alpha\beta}$. To complete the proof of Theorem \ref{theorem_convergence}, we next show that $\lambda_{2}(L^{\alpha\beta}) = \frac{1}{2}\left( \lambda_{2}(L)+2-\sqrt{\lambda_{2}^{2}(L)+4} \right)$. According to the decomposition mechanism, $L^{\alpha\beta}$ can be formulated as
	\begin{equation} \label{L_alpha_beta}
	L^{\alpha\beta} = \begin{bmatrix}
	L+I_{n} & -I_{n}
	\\
	-I_{n} & I_{n}
	\end{bmatrix},
	\end{equation}
	with $I_{n} \in \mathbb{R}^{n\times n}$ being the $n$-dimensional identity matrix. Let $\lambda(L_{\alpha\beta})$, $\lambda(L) \in \mathbb{R}$ be the eigenvalue of $L_{\alpha\beta}$ and $L$, respectively. Based on \eqref{L_alpha_beta} and the eigenvalue-eigenvector equation \cite{hornBOOK2012}, it can be derived that 
	\begin{equation} \label{lambda_L}
	\lambda(L_{\alpha\beta}) = \frac{1}{2}\left( \lambda(L)+2\pm\sqrt{\lambda^{2}(L)+4} \right).
	\end{equation} 
	From \eqref{lambda_L}, it can be further deduced that the second smallest eigenvalue of $L_{\alpha\beta}$, i.e., $\lambda_{2}(L^{\alpha\beta})$, is $\frac{1}{2}\left( \lambda_{2}(L)+2-\sqrt{\lambda_{2}^{2}(L)+4} \right)$, which completes the proof.
\end{proof}

\begin{remark}
	Given a graph topology, the second smallest eigenvalue of Laplacian $\lambda_{2}(\cdot)$ is a measure of the convergence rate of consensus algorithms \cite{Olfati-Saber2007}. The convergence rate of the algorithm in \eqref{dot_x} is no worse than $\kappa \lambda_{2}(L)$, and after state decomposition, this measure will decrease to $\frac{\kappa}{2}\left( \lambda_{2}(L)+2-\sqrt{\lambda_{2}^{2}(L)+4} \right)$ since the connectivity of the graph becomes weaker. 
	%Nevertheless, under the decomposition mechanism, all the sub-states will still 
\end{remark}

We next show that the decomposition scheme can protect the privacy of the agents against the external eavesdropper.

\begin{theorem} \label{theorem_privacy}
	Under the decomposition mechanism, an external eavesdropper cannot infer the reference signals $r_{p}(t)$ and $f_{p}(t)$ of any agent $p$ with any guaranteed accuracy.
\end{theorem}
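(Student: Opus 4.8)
The plan is to establish privacy through an observational-indistinguishability (information-theoretic) argument: I will show that the eavesdropper's entire observation is consistent with $r_p(t)$ and $f_p(t)$ taking \emph{any} value in $\mathbb{R}^m$, so that no finite range can ever be certified. First I would pin down exactly what the adversary can access. Under the decomposition only the $\alpha$-substates are transmitted, so the eavesdropper observes the family $\{x_i^\alpha(t)\}_{i\in\mathcal{V},\,t\ge 0}$ together with the topology and $\kappa$. Evaluating at $t=0$ reveals $r_p^\alpha(0)=x_p^\alpha(0)$, but the companion substate $x_p^\beta(t)$, the split $f_p^\alpha(t),f_p^\beta(t)$, and $r_p^\beta(0)$ are never transmitted. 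Differentiating the shared $\alpha$-equation for agent $p$ and solving for the hidden terms, the only combination of $p$'s private data the adversary can reconstruct is $h_p(t)\triangleq f_p^\alpha(t)+\kappa x_p^\beta(t)$, recovered as $h_p(t)=\dot{x}_p^\alpha(t)-\kappa\sum_{j}a_{pj}\bigl(x_j^\alpha(t)-x_p^\alpha(t)\bigr)+\kappa x_p^\alpha(t)$.

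Next I would construct, for an arbitrarily prescribed private trajectory, an alternative admissible realization that produces the identical observation. Fix the true realization and let $\xi(\cdot)$ be any bounded $C^1$ function with arbitrary initial value $\xi(0)$. Define the alternative hidden data for agent $p$ by $x_p^\beta\triangleq\xi$, $r_p^\beta(0)\triangleq\xi(0)$, $f_p^\alpha\triangleq h_p-\kappa\xi$, and $f_p^\beta\triangleq\dot{\xi}-\kappa(x_p^\alpha-\xi)$, while keeping $r_p^\alpha(0)=x_p^\alpha(0)$ and leaving every other agent's data unchanged. This choice forces the $\beta$-equation $\dot{x}_p^\beta=f_p^\beta+\kappa(x_p^\alpha-x_p^\beta)$ to be solved by $\xi$ and, crucially, keeps $f_p^\alpha+\kappa x_p^\beta=h_p$ invariant, so the right-hand side of $p$'s $\alpha$-equation is unchanged.

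I would then verify indistinguishability by uniqueness of ODE solutions: the unchanged family $\{x_i^\alpha\}$ together with $x_p^\beta=\xi$ (and the unchanged $x_j^\beta$, $j\neq p$) satisfies the alternative closed-loop system \eqref{sub_x} with the same initial conditions, hence is its unique solution; in particular the transmitted signals coincide with the true ones for all $t$. Under this realization the recovered private data are $r_p(0)=\tfrac12\bigl(x_p^\alpha(0)+\xi(0)\bigr)$ and $f_p(t)=\tfrac12\bigl(f_p^\alpha(t)+f_p^\beta(t)\bigr)=\tfrac12\bigl(h_p(t)-\kappa x_p^\alpha(t)\bigr)+\tfrac12\dot{\xi}(t)$, whence $r_p(t)=r_p(0)+\int_0^t f_p(\tau)\,d\tau$. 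Since $\xi(0)$ and $\dot{\xi}(t)$ may be chosen freely (a bounded $\xi$ with bounded derivative keeps $f_p^\alpha,f_p^\beta$ bounded, so the realization is admissible), the consistent values of $r_p(0)$ and of $f_p(t)$ at each instant span all of $\mathbb{R}^m$; the observation therefore confines neither $r_p(t)$ nor $f_p(t)$ to any bounded set. As $p$ is arbitrary, this proves the claim.

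The main obstacle is the first step: rigorously certifying that $h_p$ and $r_p^\alpha(0)$ are genuinely the \emph{only} functionals of agent $p$'s private data surviving in the observation, i.e.\ that the global, time-extended coupling of the connected network leaks no further information allowing the adversary to separate $f_p^\alpha$ from $x_p^\beta$. A related delicate point is to reconcile the construction with Theorem \ref{theorem_convergence}: altering $f_p^\alpha,f_p^\beta$ changes the true average $\tfrac1n\sum_j r_j(t)$, yet the observation is unchanged. The resolution I would make precise is that this average is itself hidden from the adversary (it depends on the unseen splits) and that ``ultimately bounded'' permits an input-dependent neighborhood whose radius grows with $\|f_p^\alpha-f_p^\beta\|$, so convergence of the consensus iteration yields the adversary no usable constraint on $r_p$.
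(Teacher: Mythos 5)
Your proof is correct, and it shares the paper's overall strategy --- exhibit, for every admissible alternative $\bar{r}_p \neq r_p$, an alternative realization of the decomposed dynamics \eqref{sub_x} that reproduces the eavesdropper's observation exactly, then conclude by uniqueness of solutions --- but your construction is genuinely different from the paper's. The paper perturbs \emph{two} agents: it alters agent $p$'s data and compensates through a neighbor $l$ (setting $\bar{r}_l(0) = r_l(0) + r_p(0) - \bar{r}_p(0)$ and $\bar{f}_l(t) = f_l(t) + f_p(t) - \bar{f}_p(t)$) precisely so that the network average $\frac{1}{n}\sum_j \bar{r}_j(t)$ is unchanged; all indistinguishable alternatives then share the same consensus target, and the question of whether the asymptotic behavior of the observed $x_i^\alpha$ leaks the average never arises. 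You perturb only agent $p$, parametrizing alternatives by the free hidden trajectory $\xi$ and exploiting the fact that the observation pins down only the combination $h_p = f_p^\alpha + \kappa x_p^\beta$; your verification that $\bar{f}_p^\alpha = h_p - \kappa\xi$ and $\bar{f}_p^\beta = \dot{\xi} - \kappa\left(x_p^\alpha - \xi\right)$ leave the $\alpha$-dynamics invariant is correct, and since the induced alternative is $\bar{r}_p = r_p + \frac{1}{2}\left(\xi - x_p^\beta\right)$ with $\xi$ an arbitrary bounded $C^1$ function, your family covers every admissible alternative trajectory. What your route buys is a strictly stronger conclusion: since nothing outside agent $p$ is modified, privacy holds even against an adversary who additionally knows all other agents' reference signals (i.e., colluding neighbors), not merely an external eavesdropper. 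What it costs is exactly the reconciliation you flagged: your alternatives have different averages, so you must argue that the ultimate bound in Theorem \ref{theorem_convergence} is input-dependent and the consensus value therefore not certifiably observable; your resolution is sound --- the alternative realization is itself an admissible instance of the algorithm whose unique solution \emph{is} the observed trajectory, so no convergence statement can contradict it --- whereas the paper's neighbor-compensation sidesteps this discussion entirely. Two minor points: your stated ``main obstacle'' (certifying that $h_p$ and $r_p^\alpha(0)$ are the \emph{only} leaked functionals) is not actually needed --- the explicit construction plus ODE uniqueness already yields indistinguishability, and no converse characterization of the leakage is required; and you should state explicitly that $x_p^\alpha$ is bounded (which follows from Theorem \ref{theorem_convergence} and boundedness of the $r_j$), since that is what makes $\bar{f}_p^\beta$ bounded and hence the alternative realization admissible.
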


\begin{proof}
	Under the decomposition mechanism, the information accessible to the eavesdropper at time $t$ can be defined as $I(t) \triangleq \left\lbrace A, \kappa, x_{i}^{\alpha}(t), i=1, 2, \cdots, n \right\rbrace$. To show that the privacy of the reference signals $r_{p}(t)$ and $f_{p}(t)$ can be preserved against the eavesdropper, it suffices to present that under any value $\bar{r}_{p}(t) = \bar{r}_{p}(0) + \int_{0}^{t} \bar{f}_{p}(\tau)d\tau$ satisfying $\bar{r}_{p}(t) \neq r_{p}(t)$, the information $\bar{I}(t) \triangleq \left\lbrace A, \kappa, \bar{x}_{i}^{\alpha}(t), i=1, 2, \cdots, n \right\rbrace$ accessible to the eavesdropper could be exactly the same as the information $I(t)$ cumulated under $r_{p}(t)$. This is because the only information available for the eavesdropper to extract the signals $r_{p}(t)$ and $f_{p}(t)$ is $I(t)$, and if $I(t)$ could be the outcome under any value of $r_{p}(t)$, then the eavesdropper has no way to even find a range for $r_{p}(t)$ and $f_{p}(t)$. Therefore, we only need to prove that for any $\bar{r}_{p}(t) \neq r_{p}(t)$, $\bar{I}(t) = I(t)$ could hold.
	
	Let agent $l$ be one of the neighbors of agent $p$. Next we show that given $\bar{r}_{p}(t)$ (i.e., $\bar{r}_{p}(0)$ and $\bar{f}_{p}(t)$), by suitably selecting the values of $\bar{r}_{l}(0)$, $\bar{r}_{p}^{\alpha}(0)$, $\bar{r}_{p}^{\beta}(0)$, $\bar{r}_{l}^{\alpha}(0)$, $\bar{r}_{l}^{\beta}(0)$, $\bar{f}_{p}^{\alpha}(t)$, $\bar{f}_{p}^{\beta}(t)$, $\bar{f}_{l}^{\alpha}(t)$, and $\bar{f}_{l}^{\beta}(t)$, $\bar{I}(t) = I(t)$ could hold under $\bar{r}_{p}(t) \neq r_{p}(t)$.
	%Next we show that given $\bar{r}_{p}(t)$ (i.e., $\bar{r}_{p}(0)$ and $\bar{f}_{p}(t)$), by suitably selecting the values of $\bar{r}_{p}^{\alpha}(0)$, $\bar{r}_{p}^{\beta}(0)$, $\bar{f}_{p}^{\alpha}(t)$, and $\bar{f}_{p}^{\beta}(t)$, $\bar{I}(t) = I(t)$ can hold under $\bar{r}_{p}(t) \neq r_{p}(t)$. 
	Specifically, under the following conditions:
	\begin{equation} \label{bar_r0}
	\begin{aligned}
	\bar{r}_{l}(0) &= r_{l}(0) + r_{p}(0) - \bar{r}_{p}(0),
	\\
	\bar{r}_{p}^{\alpha}(0) &= r_{p}^{\alpha}(0), \bar{r}_{p}^{\beta}(0) = 2\bar{r}_{p}(0) - r_{p}^{\alpha}(0),
	\\
	\bar{r}_{l}^{\alpha}(0) &= r_{l}^{\alpha}(0), \bar{r}_{l}^{\beta}(0) = 2\bar{r}_{l}(0) - r_{l}^{\alpha}(0),
	\\
	\bar{r}_{q}(0) &= r_{q}(0), \bar{r}_{q}^{\alpha}(0) = r_{q}^{\alpha}(0), \bar{r}_{q}^{\beta}(0) = r_{q}^{\beta}(0), \forall q \in \mathcal{V}\setminus \left\lbrace p, l \right\rbrace,
	\end{aligned}
	\end{equation}	
	\begin{equation} \label{bar_f}
	\begin{aligned}
	\bar{f}_{l}(t) &= f_{l}(t) + f_{p}(t) - \bar{f}_{p}(t),
	\\
	\bar{f}_{p}^{\alpha}(t) &= f_{p}^{\alpha}(t) + 2\kappa \left( r_{p}(t)-\bar{r}_{p}(t) \right), \bar{f}_{p}^{\beta}(t) = 2\bar{f}_{p}(t) - \bar{f}_{p}^{\alpha}(t),
	\\
	\bar{f}_{l}^{\alpha}(t) &= f_{l}^{\alpha}(t) + 2\kappa \left( \bar{r}_{p}(t)-r_{p}(t) \right), \bar{f}_{l}^{\beta}(t) = 2\bar{f}_{l}(t) - \bar{f}_{l}^{\alpha}(t),
	\\
	\bar{f}_{q}(t) &= f_{q}(t), \bar{f}_{q}^{\alpha}(t) = f_{q}^{\alpha}(t), \bar{f}_{q}^{\beta}(t) = f_{q}^{\beta}(t), \forall q \in \mathcal{V}\setminus \left\lbrace p, l \right\rbrace,
	\end{aligned}
	\end{equation}
	and system dynamics
	\begin{equation} \label{sub_bar_x}
	\begin{aligned}
	\dot{\bar{x}}_{i}^{\alpha}(t) &= \bar{f}_{i}^{\alpha}(t) + \kappa \sum_{j=1}^{n}a_{ij}\left( \bar{x}_{j}^{\alpha}(t) - \bar{x}_{i}^{\alpha}(t) \right) 
	\\
	& \qquad \quad \;\;\, + \kappa \left( \bar{x}_{i}^{\beta}(t) - \bar{x}_{i}^{\alpha}(t) \right),
	\\
	\dot{\bar{x}}_{i}^{\beta}(t) &= \bar{f}_{i}^{\beta}(t) + \kappa \left( \bar{x}_{i}^{\alpha}(t) - \bar{x}_{i}^{\beta}(t) \right),
	\\
	\bar{x}_{i}^{\alpha}(0) &= \bar{r}_{i}^{\alpha}(0), \quad \bar{x}_{i}^{\beta}(0) = \bar{r}_{i}^{\beta}(0), \quad \forall i \in \mathcal{V},
	\end{aligned}
	\end{equation}
	the new sub-state $\bar{x}_{i}^{\alpha}(t)$ will be the same as $x_{i}^{\alpha}(t)$, for all $i \in \mathcal{V}$, i.e., $\bar{I}(t) = I(t)$. 
	%Based on \eqref{bar_r0} and \eqref{bar_f}, it is clear that if $\bar{x}_{p}^{\alpha}(t) = x_{p}^{\alpha}(t)$, then $\forall i \in\mathcal{V}$ 
	Note that the first equations in both \eqref{bar_r0} and \eqref{bar_f} are introduced to ensure that the average consensus value $\frac{1}{n}\sum_{j=1}^{n} \bar{r}_{j}(t)$ is the same as the original one $\frac{1}{n}\sum_{j=1}^{n}r_{j}(t)$.
	Now we prove that $\bar{I}(t) = I(t)$. First, from \eqref{r_alpha_beta}, \eqref{sub_x}, \eqref{bar_r0}, and \eqref{sub_bar_x}, it can be obtained that
	\begin{equation} \label{bar_x0}
	\begin{aligned}
	\bar{x}_{p}^{\alpha}(0) &= x_{p}^{\alpha}(0), \bar{x}_{p}^{\beta}(0) = x_{p}^{\beta}(0) + 2\left( \bar{r}_{p}(0)-r_{p}(0) \right),
	\\
	\bar{x}_{l}^{\alpha}(0) &= x_{l}^{\alpha}(0), \bar{x}_{l}^{\beta}(0) = x_{l}^{\beta}(0) + 2\left( r_{p}(0)-\bar{r}_{p}(0) \right),
	\\
	\bar{x}_{q}^{\alpha}(0) &= x_{q}^{\alpha}(0), \bar{x}_{q}^{\beta}(0) = x_{q}^{\beta}(0), \forall q \in \mathcal{V}\setminus \left\lbrace p, l \right\rbrace.
	\end{aligned}
	\end{equation}
	Furthermore, based on \eqref{bar_f} and \eqref{bar_x0}, it can be verified that
	\begin{equation} \label{bar_xt}
	\begin{aligned}
	\bar{x}_{p}^{\alpha}(t) &= x_{p}^{\alpha}(t), \bar{x}_{p}^{\beta}(t) = x_{p}^{\beta}(t) + 2\left( \bar{r}_{p}(t)-r_{p}(t) \right),
	\\
	\bar{x}_{l}^{\alpha}(t) &= x_{l}^{\alpha}(t), \bar{x}_{l}^{\beta}(t) = x_{l}^{\beta}(t) + 2\left( r_{p}(t)-\bar{r}_{p}(t) \right),
	\\
	\bar{x}_{q}^{\alpha}(t) &= x_{q}^{\alpha}(t), \bar{x}_{q}^{\beta}(t) = x_{q}^{\beta}(t), \forall q \in \mathcal{V}\setminus \left\lbrace p, l \right\rbrace, 
	\end{aligned}
	\end{equation}
	is the solution to \eqref{sub_bar_x}. It is obvious that the solution \eqref{bar_xt} satisfies $\forall i \in \mathcal{V}$, $\bar{x}_{i}^{\alpha}(t) = x_{i}^{\alpha}(t)$, and thus $\bar{I}(t) = I(t)$ could hold under $\bar{r}_{p}(t) \neq r_{p}(t)$. Based on the above analysis, it can be concluded that no matter what attack model is used, the eavesdropper cannot infer $r_{p}(t)$ and $f_{p}(t)$ from $I(t)$ with any guaranteed accuracy.
\end{proof}

\begin{remark}
	The proposed state decomposition scheme is a general privacy-preserving augmentation to dynamic average consensus algorithms. 
	%Since the scheme is completely decentralized, 
	While the above analysis is based on the dynamic consensus algorithm in \eqref{dot_x}, the state decomposition framework can be integrated with other dynamic average consensus approaches (e.g., \cite{Freeman2006CDC,Bai2010CDC,ChenTAC2012,ChenTAC2015}) to improve the resilience of original methods to privacy attacks.
\end{remark}

\begin{remark}
	The proposed state decomposition scheme is a significant extension on the basis of the work \cite{WangTAC2019}. Different from \cite{WangTAC2019} that focuses on protecting the privacy in discrete-time static average consensus, we exploit the decomposition mechanism to achieve the privacy preservation in continuous-time dynamic average consensus. As dynamic average consensus involves  dynamically evolving reference signals, more sophisticated analytical tools relying on algebraic graph theory and control theory are used to conduct the development of the proposed scheme.  
\end{remark}

\iffalse
\begin{theorem}
	Under the decomposition mechanism, an external eavesdropper cannot infer the reference signals $r_{i}(t)$ and $f_{i}(t)$ of any agent $i$ with any guaranteed accuracy.
\end{theorem}

\begin{IEEEproof}
	To show that the eavesdropper cannot identify $r_{i}(t)$ with any guaranteed accuracy, we present that any variation of $r_{i}(0)$ is indistinguishable to the eavesdropper, i.e., the information accessible to the eavesdropper can be the same even if $r_{i}(0)$ is changed to arbitrary values $\bar{r}_{i}(0) \leq r_{i}(0)$. The information accessible to the eavesdropper at time $t$ is defined as $I(t) \triangleq \left\lbrace A, \kappa, x_{i}^{\alpha}(t), i=1, 2, \cdots, n \right\rbrace$. The initial value
\end{IEEEproof}
\fi

\section{Application to Formation Control} \label{sec_formation}

In this section, the privacy-preserving dynamic average consensus is applied to the formation control of non-holonomic mobile robots under the adversarial environment with eavesdropping attackers.

\subsection{Formation Control Objective}

\begin{figure}[!htbp]
	\centering
	\hspace{0 in}
	\includegraphics[width=2.5 in]{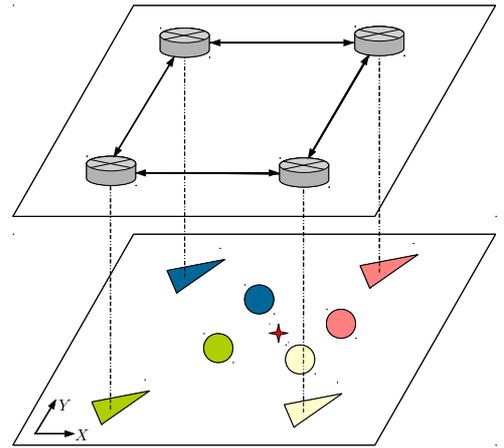}
	\caption{Dynamic consensus based formation control. The triangles are the non-holonomic mobile robots; the circles are the mobile targets; and the cross is the center of the mobile targets.}
	\label{fig_formation}
\end{figure}
In particular, consider  $n$ networked non-holonomic mobile robots and $n$ mobile moving targets in the motion plane. Each mobile robot $i$ has access to its own position $s_{i}(t) \triangleq \begin{bmatrix}
s_{xi}(t) & s_{yi}(t)
\end{bmatrix}^{T} \in \mathbb{R}^{2}$ and can monitor the mobile target $i$'s position $p_{i}(t)\triangleq \begin{bmatrix}
p_{xi}(t) & p_{yi}(t)
\end{bmatrix}^{T} \in \mathbb{R}^{2}$ and velocity $\dot{p}_{i}(t)=q_{i}(t)\triangleq \begin{bmatrix}
q_{xi}(t) & q_{yi}(t)
\end{bmatrix}^{T} \in \mathbb{R}^{2}$. $s_{i}(t)$, $p_{i}(t)$, and $q_{i}(t)$ are all expressed with respect to the inertial coordinate frame. 
Each mobile robot shares relevant information with its neighbors via wireless communication. 
We consider the case that the mobile robots and targets are operating in the adversarial environment with eavesdropping attackers.
The objective of the networked mobile robots is to follow the set of mobile targets by spreading out in a pre-specified formation while preserving the privacy of mobile targets against the eavesdropper. Specifically, the formation control requires that by using the information received from the communication network, each mobile robot $i$ is driven to a relative vector $b_{i}(t) \triangleq \begin{bmatrix}
b_{xi}(t) & b_{yi}(t)
\end{bmatrix}^{T} \in \mathbb{R}^{2}$ with respect to the time-varying geometric center of the mobile targets, i.e., $s_{i}(t) \rightarrow \frac{1}{n} \sum_{i=1}^{n}p_{i}(t) + b_{i}(t)$ as $t \rightarrow \infty$. Meanwhile, since each mobile robot can only monitor one mobile target, it needs to cooperate with its neighbors to compute the geometric center, which induces the risk of breaching the privacy of mobile targets. An external eavesdropper can use the leaked sensitive information, such as the position and/or velocity information, to maliciously track and attack the mobile targets. Therefore,
it is important that the formation control scheme can protect the privacy of mobile targets against the eavesdropping attacker, i.e., an external eavesdropper cannot identify the position and/or velocity of mobile targets based on the network information. An example scenario in which a team of mobile robots tracks a group of mobile targets is depicted in Figure~\ref{fig_formation}.

\subsection{Control Design}
As discussed in \cite{Porfiri2007AUTO,KiaCSM2019}, the formation control problem in the aforementioned scenario can be addressed with a two-layer method. In the cyber layer, the privacy-preserving dynamic average consensus algorithm is used to estimate the geometric center of mobile targets in a distributed manner, while in the physical layer, the mobile robot $i$ is actuated to follow the estimate of the geometric center with a desired relative bias $b_{i}(t)$. The implementation details are given now.

%For each mobile robot $i$, let $c_{i}(t)\triangleq \begin{bmatrix} c_{xi}(t) & c_{yi}(t) \end{bmatrix}^{T} \in \mathbb{R}^{2}$ be the estimate of the geometric center of mobile targets. 
%Since each mobile robot only can monitor one mobile target, it needs to cooperate with its neighbors to compute the geometric center in the cyber layer. 
In the cyber layer, the state decomposition based dynamic average consensus algorithm presented in Section \ref{sec_privacy} is utilized for the calculation of the geometric center $\frac{1}{n} \sum_{i=1}^{n}p_{i}(t)$ and the privacy preservation of mobile targets. More precisely, let $c_{i}^{\alpha}(t) \triangleq \begin{bmatrix}
c_{xi}^{\alpha}(t) & c_{yi}^{\alpha}(t)
\end{bmatrix}^{T} \in \mathbb{R}^{2}$ and $c_{i}^{\beta}(t) \triangleq \begin{bmatrix}
c_{xi}^{\beta}(t) & c_{yi}^{\beta}(t)
\end{bmatrix}^{T} \in \mathbb{R}^{2}$ be two sub-sets of the geometric center estimates, which are updated by
\begin{equation} \label{sub_c}
\begin{aligned}
\dot{c}_{i}^{\alpha}(t) &= q_{i}^{\alpha}(t) + \kappa \sum_{j=1}^{n}a_{ij}\left( c_{j}^{\alpha}(t) - c_{i}^{\alpha}(t) \right) 
\\
& \qquad \quad \;\;\, + \kappa \left( c_{i}^{\beta}(t) - c_{i}^{\alpha}(t) \right),
\\
\dot{c}_{i}^{\beta}(t) &= q_{i}^{\beta}(t) + \kappa \left( c_{i}^{\alpha}(t) - c_{i}^{\beta}(t) \right),
\\
c_{i}^{\alpha}(0) &= p_{i}^{\alpha}(0), \quad c_{i}^{\beta}(0) = p_{i}^{\beta}(0),
\end{aligned}
\end{equation} 
where $p_{i}^{\alpha}(0)$, $p_{i}^{\beta}(0)$, $q_{i}^{\alpha}(t)$, $q_{i}^{\beta}(t) \in \mathbb{R}^{2}$ are selected to satisfy
\begin{equation} \label{p_alpha_beta}
\begin{aligned}
p_{i}^{\alpha}(0) + p_{i}^{\beta}(0) = 2p_{i}(0),
\\
q_{i}^{\alpha}(t) + q_{i}^{\beta}(t) = 2q_{i}(t).
\end{aligned}
\end{equation}
As shown in Theorem \ref{theorem_convergence} and Theorem \ref{theorem_privacy}, $c_{i}^{\alpha}(t)$ will converge to the neighbourhood of the geometric center $\frac{1}{n} \sum_{i=1}^{n}p_{i}(t)$, and the mobile targets' information cannot be identified by the external eavesdropper.

In the physical layer, the objective now is to design a tracking controller for non-holonomic mobile robot $i$ to ensure that $s_{i}(t) \rightarrow c_{i}^{\alpha}(t) + b_{i}(t)$ as $t \rightarrow \infty$. The kinematic model of non-holonomic mobile robot $i$ is described by 
\begin{equation} \label{dot_si}
\begin{aligned}
\dot{s}_{xi}(t) &= v_{i}(t)\cos(\theta_{i}(t)),
\\
\dot{s}_{yi}(t) &= v_{i}(t)\sin(\theta_{i}(t)),
\\
\dot{\theta}_{i}(t) &= \omega_{i}(t),
\end{aligned}
\end{equation}
where $\theta_{i}(t) \in \mathbb{R}$ is the heading angle expressed in the inertial coordinate frame, and $v_{i}(t)$, $\omega_{i}(t) \in \mathbb{R}$ are the linear and angular velocity, respectively. To facilitate the following development, the desired heading angle $\theta_{di}(t)\in \mathbb{R}$ and desired linear velocity $v_{di}(t)\in \mathbb{R}$ are constructed as
\begin{equation} \label{thetadi_vdi}
\begin{aligned}
\theta_{di}(t) &= \arctan \left(\frac{\dot{c}_{yi}^{\alpha}(t)}{\dot{c}_{xi}^{\alpha}(t)} \right),
\\
v_{di}(t) &= \sqrt{(\dot{c}_{xi}^{\alpha}(t))^{2} + (\dot{c}_{yi}^{\alpha}(t))^{2}},
\end{aligned}
\end{equation}
which indicates that $\dot{c}_{xi}(t)$ and $\dot{c}_{yi}(t)$ can be rewritten as $\dot{c}_{xi}(t) = v_{di}(t)\cos(\theta_{di}(t))$, $\dot{c}_{yi}(t) = v_{di}(t)\sin(\theta_{di}(t))$. Based on coordinate transformation, the system errors are defined as
\begin{equation} \label{ei}
\begin{aligned}
e_{xi}(t) &\triangleq \cos(\theta_{i}(t)) (s_{xi}(t)-c_{xi}^{\alpha}(t)-b_{xi}(t)) 
\\
& \quad \, + \sin(\theta_{i}(t)) (s_{yi}(t)-c_{yi}^{\alpha}(t)-b_{yi}(t)),
\\
e_{yi}(t) &\triangleq -\sin(\theta_{i}(t)) (s_{xi}(t)-c_{xi}^{\alpha}(t)-b_{xi}(t)) 
\\
& \quad \, + \cos(\theta_{i}(t)) (s_{yi}(t)-c_{yi}^{\alpha}(t)-b_{yi}(t)),
\\
e_{\theta i}(t) &\triangleq \theta_{i}(t) - \theta_{di}(t).
\end{aligned}
\end{equation}
It is clear that $s_{i}(t) \rightarrow c_{i}^{\alpha}(t) + b_{i}(t)$ as $\begin{bmatrix}
e_{xi}(t) & e_{yi}(t) & e_{\theta i}(t)
\end{bmatrix} \rightarrow 0$. Note that the mobile robot is subjected to non-holonomic constraint, and thus in general time-varying auxiliary variables are needed to facilitate the controller design \cite{Wang2015Simultaneous,Huang2013234,JIANGdagger19971393}. Considering the non-holonomic constraint, an auxiliary error $\bar{e}_{\theta i}(t) \in \mathbb{R}$ is defined as
\begin{equation} \label{bar_ei}
\bar{e}_{\theta i}(t) \triangleq e_{\theta i}(t) - \rho_{i}(t),
\end{equation}
where the time-varying signal $\rho_{i}(t) \in \mathbb{R}$ is given by
\begin{equation} \label{rho_i}
\begin{aligned}
\rho_{i}(t) &\triangleq \iota_{0} \varpi_{i}(t) \tanh \left(\iota_{1}\sqrt{e_{xi}^{2}(t) + e_{yi}^{2}(t)} \right) \sin(\iota_{2}t)
\end{aligned}
\end{equation}
with $\varpi_{i}(t) \triangleq \exp \left(-\int_{0}^{t} |v_{di}(\tau)|d\tau \right) \in \mathbb{R} $ and $\iota_{0}$, $\iota_{1}$, $\iota_{2} \in \mathbb{R}$ being positive constants. To achieve the formation control, the velocity inputs $v_{i}(t)$ and $\omega_{i}(t)$ are designed as
\begin{equation} \label{c_input}
\begin{aligned}
v_{i}(t) &= -\gamma_{1}\tanh(e_{xi}(t)) + \cos(e_{\theta i}(t)) v_{di}(t),
\\
\omega_{i}(t) &= -\gamma_{2}\tanh(\bar{e}_{\theta i}(t)) + \dot{\rho}_{i}(t) - \gamma_{3} \text{sgn}(\bar{e}_{\theta i}(t))
\\
& \quad \, - \gamma_{4}\frac{\sin(e_{\theta i}(t)) - \sin(\rho_{i}(t))}{\bar{e}_{\theta i}(t)} v_{di}(t) e_{yi}(t),
\end{aligned}
\end{equation}
where $\gamma_{1}$, $\gamma_{2}$, $\gamma_{3}$, $\gamma_{4} \in \mathbb{R}$ are positive control gains, and $\text{sgn}(\cdot)$ is the standard signum function.

\begin{figure*}[!h]
	\centering
	\begin{subfigure}[b]{0.33\textwidth}
		\centering
		\includegraphics[width=2.2 in]{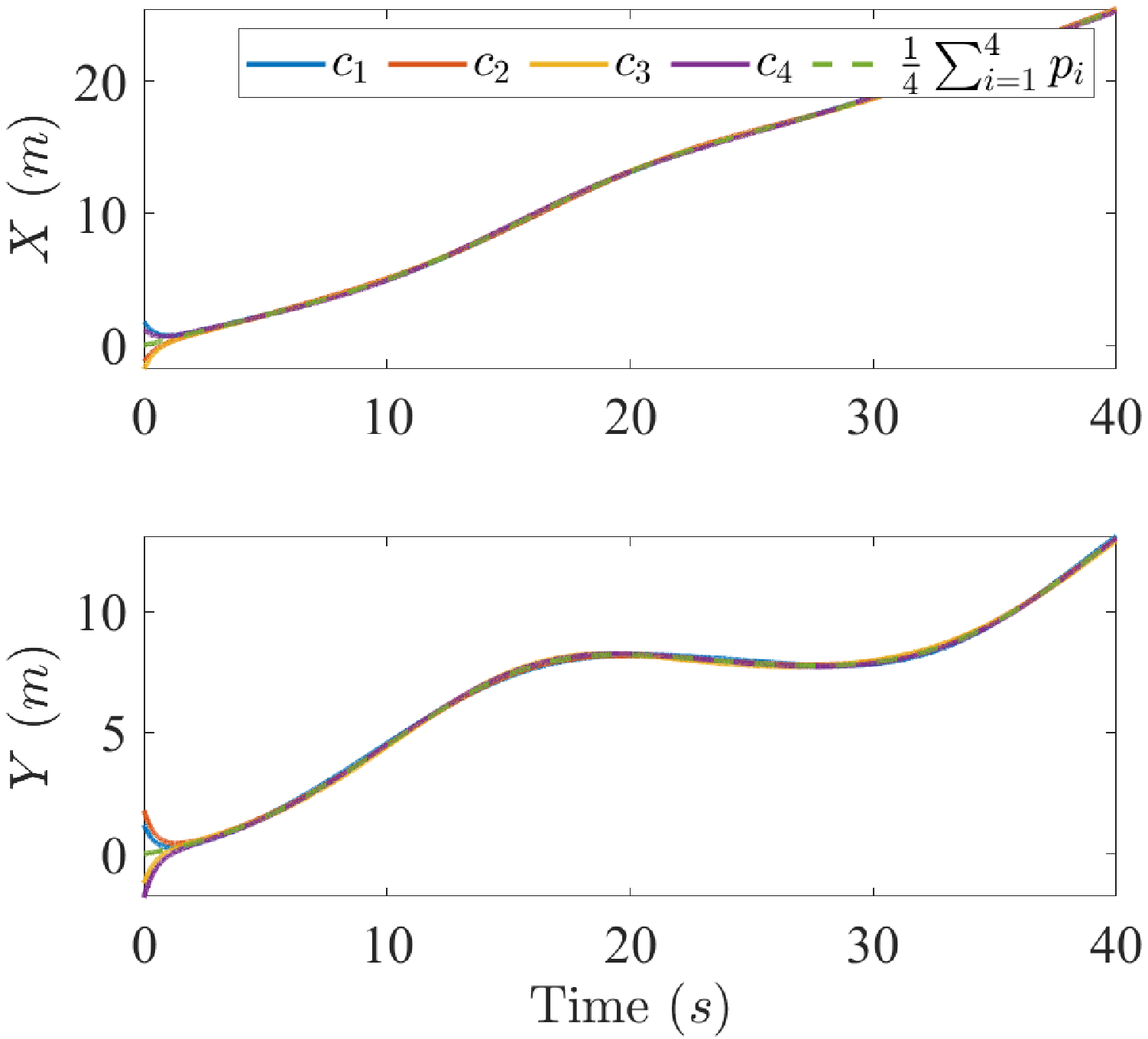}
		\caption{}
		\label{fig_consensus_c}
	\end{subfigure}
	\begin{subfigure}[b]{0.33\textwidth}
		\centering
		\includegraphics[width=2.2 in]{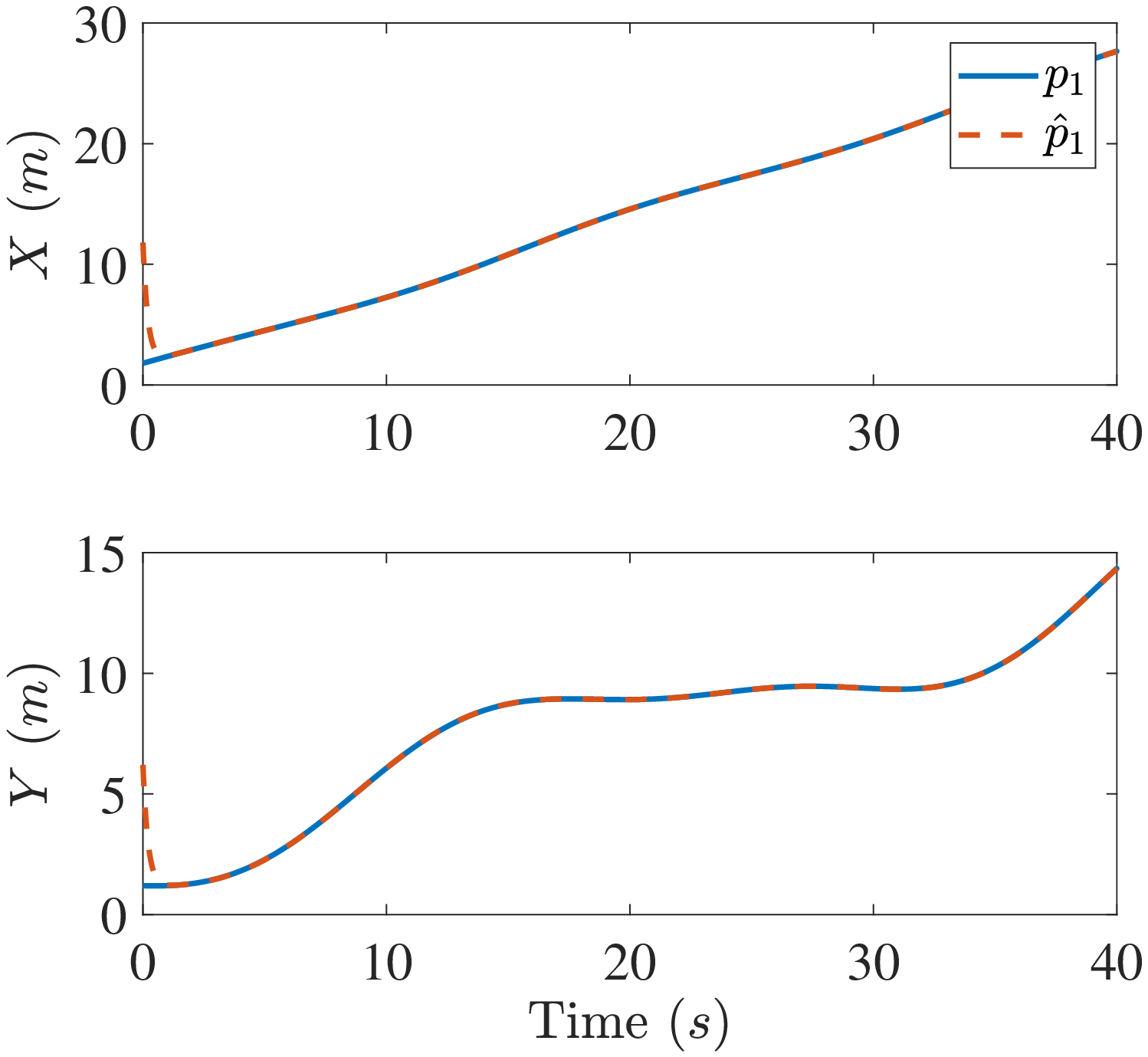}
		\caption{}
		\label{fig_consensus_p}
	\end{subfigure}
	\begin{subfigure}[b]{0.33\textwidth}
		\centering
		\includegraphics[width=2.2 in]{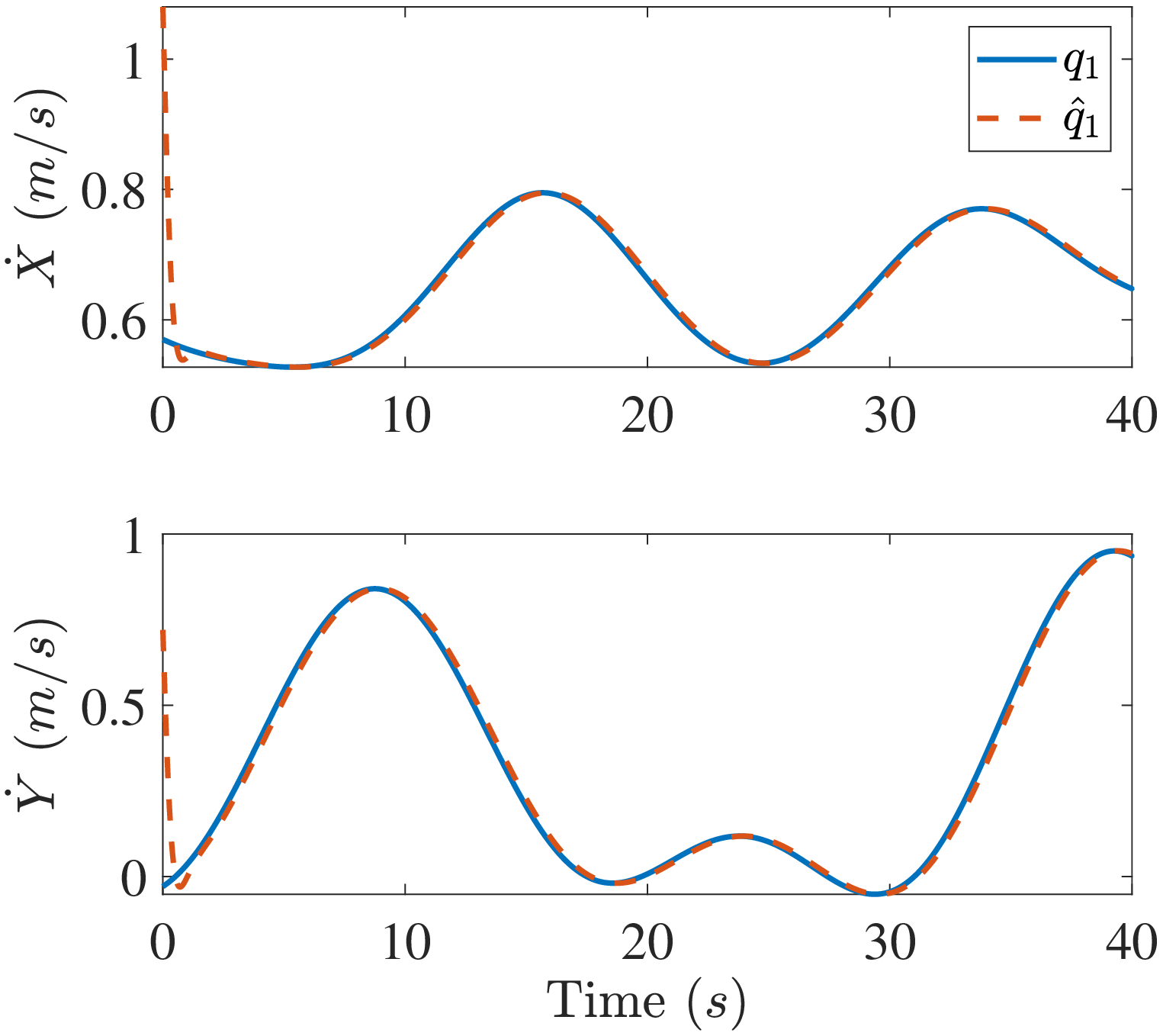}
		\caption{}
		\label{fig_consensus_q}
	\end{subfigure}
	\caption{Simulation results of conventional dynamic average consensus \eqref{dot_x}: (a) System state convergence. (b) Estimation of $p_{1}(t)$ with the eavesdropping scheme developed in Section \ref{sec_eavesdrop}. (c) Estimation of $q_{1}(t)$ with the eavesdropping scheme developed in Section \ref{sec_eavesdrop}.}
	\label{fig_sim1}
\end{figure*}
\begin{figure*}[!htbp]
	\centering
	\begin{subfigure}[b]{0.33\textwidth}
		\centering
		\includegraphics[width=2.2 in]{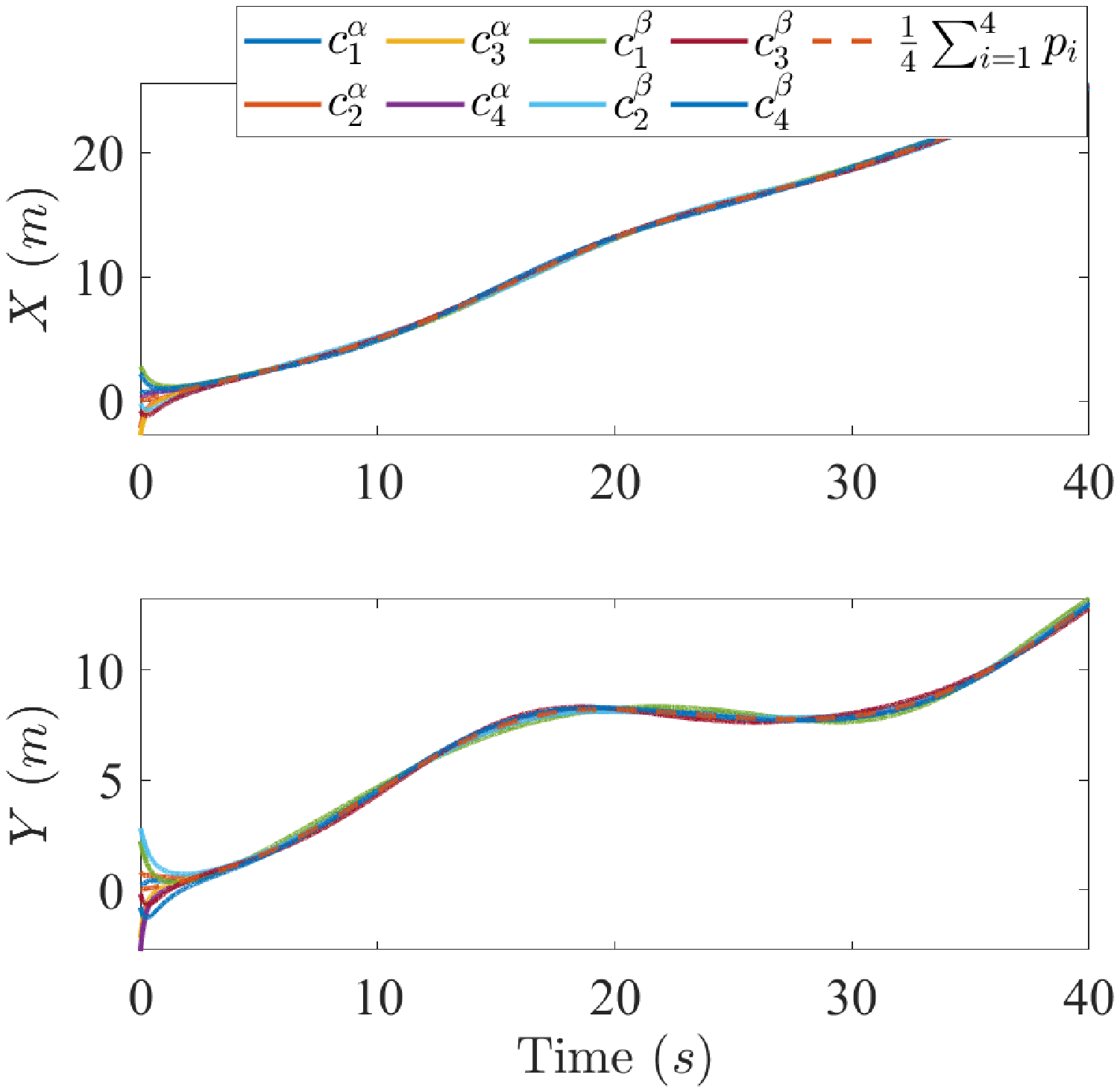}
		\caption{}
		\label{fig_consensus_c_privacy}
	\end{subfigure}
	\begin{subfigure}[b]{0.33\textwidth}
		\centering
		\includegraphics[width=2.2 in]{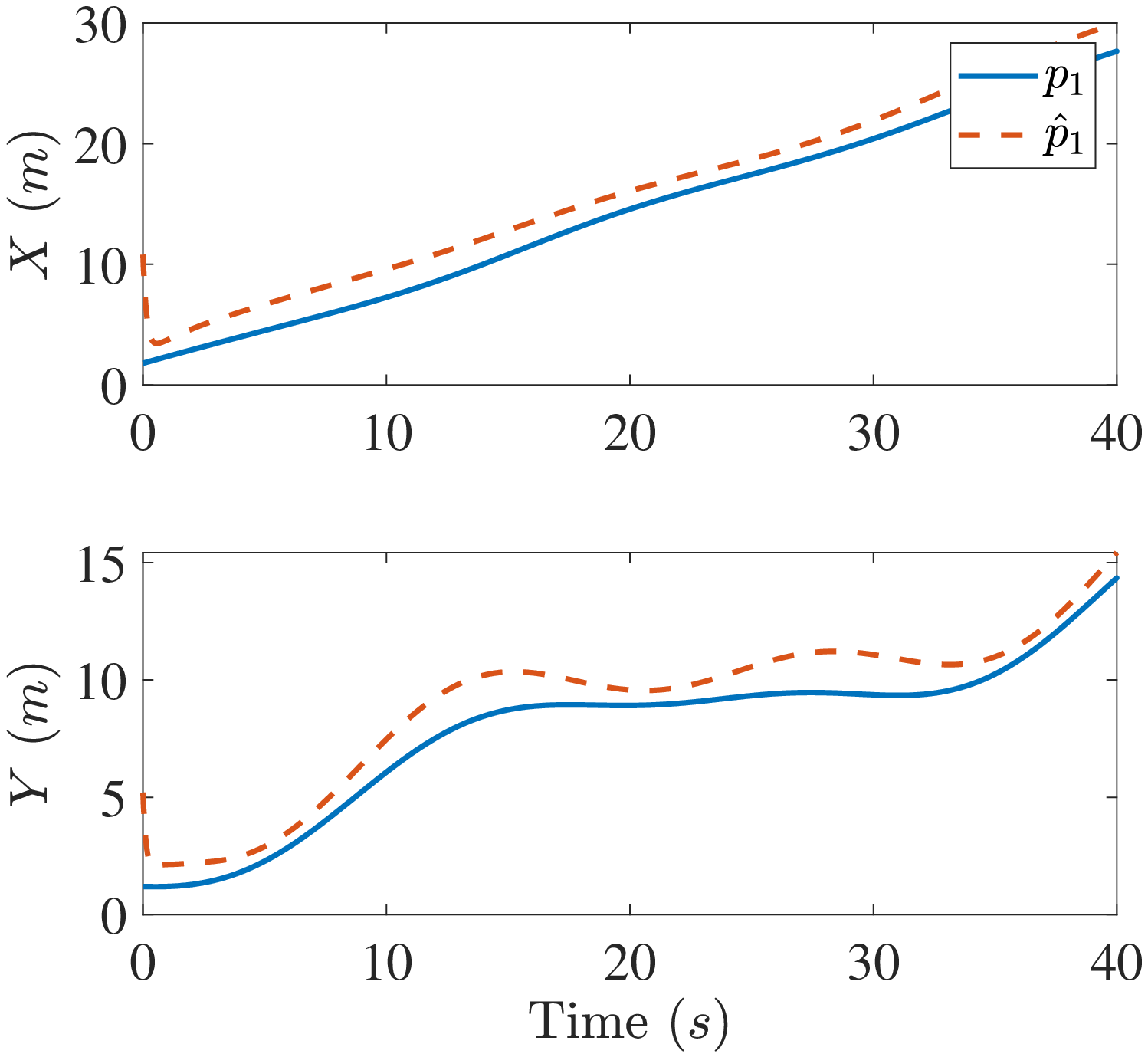}
		\caption{}
		\label{fig_consensus_p_privacy}
	\end{subfigure}
	\begin{subfigure}[b]{0.33\textwidth}
		\centering
		\includegraphics[width=2.2 in]{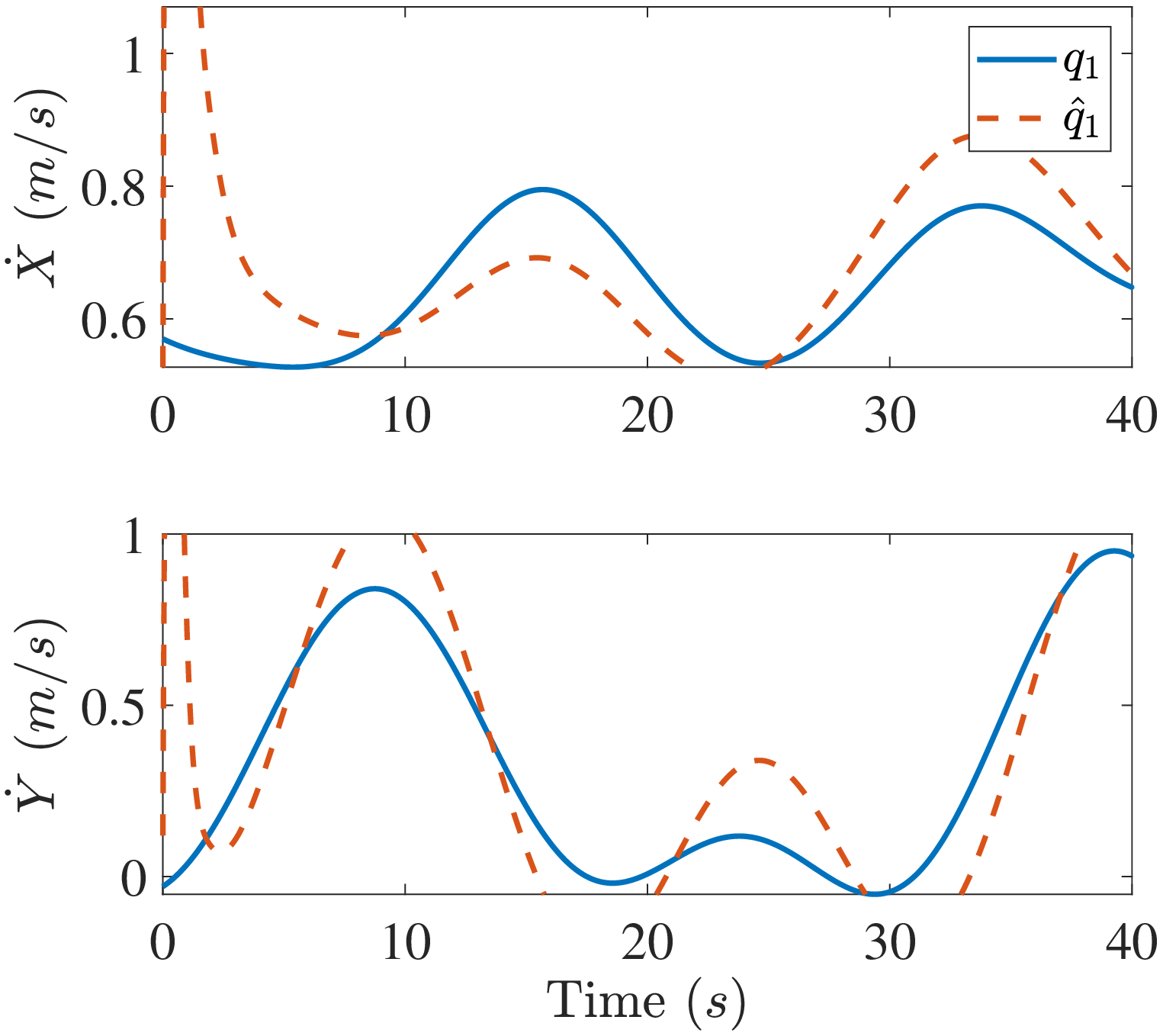}
		\caption{}
		\label{fig_consensus_q_privacy}
	\end{subfigure}
	\caption{Simulation results of privacy-preserving dynamic average consensus \eqref{sub_c}: (a) System state convergence. (b) Estimation of $p_{1}(t)$ with the eavesdropping scheme developed in Section \ref{sec_eavesdrop}. (c) Estimation of $q_{1}(t)$ with the eavesdropping scheme developed in Section \ref{sec_eavesdrop}.}
	\label{fig_sim2}
\end{figure*}

\begin{theorem} \label{theorem_formation}
	The controller designed in (\ref{c_input}) ensures that the system errors $e_{xi}(t)$, $e_{yi}(t)$, and $e_{\theta i}(t)$ ($i=1, 2, \cdots, n$) asymptotically converge to zero in the sense that
	\begin{equation} \label{lim_e}
	\lim_{t\rightarrow \infty} e_{xi}(t), e_{yi}(t), e_{\theta i}(t)=0.
	\end{equation}
\end{theorem}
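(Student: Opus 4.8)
The plan is to pass to the standard unicycle tracking-error coordinates, build one Lyapunov function for the triple $(e_{xi},e_{yi},\bar e_{\theta i})$, and then exploit the special structure of the weight $\varpi_{i}(t)$ together with the dither injected by $\rho_{i}(t)$. First I would differentiate \eqref{ei}--\eqref{bar_ei} along the kinematics \eqref{dot_si}. Using $\dot c_{xi}^{\alpha}=v_{di}\cos\theta_{di}$, $\dot c_{yi}^{\alpha}=v_{di}\sin\theta_{di}$ from \eqref{thetadi_vdi} and the rotation identities, the position errors collapse to the familiar form $\dot e_{xi}=\omega_{i}e_{yi}+v_{i}-v_{di}\cos e_{\theta i}$ and $\dot e_{yi}=-\omega_{i}e_{xi}+v_{di}\sin e_{\theta i}$, while $\dot{\bar e}_{\theta i}=\omega_{i}-\dot\theta_{di}-\dot\rho_{i}$. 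Substituting the inputs \eqref{c_input}, the feedforward $\cos(e_{\theta i})v_{di}$ in $v_{i}$ cancels $-v_{di}\cos e_{\theta i}$, leaving $\dot e_{xi}=-\gamma_{1}\tanh e_{xi}+\omega_{i}e_{yi}$, and the $\dot\rho_{i}$ term in $\omega_{i}$ cancels $-\dot\rho_{i}$, leaving $\dot{\bar e}_{\theta i}=-\gamma_{2}\tanh\bar e_{\theta i}-\gamma_{3}\,\mathrm{sgn}(\bar e_{\theta i})-\gamma_{4}\frac{\sin e_{\theta i}-\sin\rho_{i}}{\bar e_{\theta i}}v_{di}e_{yi}-\dot\theta_{di}$ (the quotient being well-defined by continuity at $\bar e_{\theta i}=0$).

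Next I would take $V=\frac{1}{2}(e_{xi}^{2}+e_{yi}^{2})+\frac{1}{2\gamma_{4}}\bar e_{\theta i}^{2}$. The gyroscopic terms $\pm\omega_{i}e_{xi}e_{yi}$ from the $e_{xi}$ and $e_{yi}$ equations cancel, and the factor $\gamma_{4}$ is chosen precisely so that the cross term $v_{di}e_{yi}(\sin e_{\theta i}-\sin\rho_{i})$ generated by $\dot e_{yi}$ is annihilated by the matching term in $\dot{\bar e}_{\theta i}$; splitting $\sin e_{\theta i}=\sin\rho_{i}+(\sin e_{\theta i}-\sin\rho_{i})$ exposes this cancellation. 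Using $\bar e_{\theta i}\,\mathrm{sgn}(\bar e_{\theta i})=|\bar e_{\theta i}|$, one is left with $\dot V\le -\gamma_{1}e_{xi}\tanh e_{xi}-\frac{\gamma_{2}}{\gamma_{4}}\bar e_{\theta i}\tanh\bar e_{\theta i}-\frac{1}{\gamma_{4}}|\bar e_{\theta i}|(\gamma_{3}-|\dot\theta_{di}|)+v_{di}e_{yi}\sin\rho_{i}$, so that choosing $\gamma_{3}\ge\sup_{t}|\dot\theta_{di}(t)|$ makes the robust $\mathrm{sgn}$ term dominate the disturbance and only the indefinite residual $v_{di}e_{yi}\sin\rho_{i}$ survives.

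The decisive observation is that this residual is integrable rather than sign-definite. Since $|\sin\rho_{i}|\le|\rho_{i}|\le\iota_{0}\varpi_{i}$ and $\dot\varpi_{i}=-|v_{di}|\varpi_{i}$ give $\int_{0}^{\infty}|v_{di}|\varpi_{i}\,d\tau=1-\varpi_{i}(\infty)\le 1$, the residual is bounded by $\iota_{0}|v_{di}|\varpi_{i}|e_{yi}|\in\mathcal{L}_{1}$ as long as $e_{yi}$ is bounded. A standard boundedness/comparison argument then yields $e_{xi},e_{yi},\bar e_{\theta i}\in\mathcal{L}_{\infty}$ and $e_{xi},\bar e_{\theta i}\in\mathcal{L}_{2}$; invoking Theorem \ref{theorem_convergence} and boundedness of the target signals to bound $v_{di},\theta_{di},\dot\theta_{di}$ (hence $\dot e_{xi},\dot{\bar e}_{\theta i}$), Barbalat's lemma gives $e_{xi}\to 0$ and $\bar e_{\theta i}\to 0$.

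The hard part will be upgrading this to $e_{yi}\to 0$, the classical nonholonomic obstruction: $\dot V$ has no negative-definite term in $e_{yi}$, and when $v_{di}\to 0$ the channel $v_{di}\sin e_{\theta i}$ that drives $e_{yi}$ collapses, so $e_{yi}$ cannot be regulated by $v_{di}$ alone (Brockett's condition). This is exactly why $\rho_{i}$ carries the persistent dither $\sin(\iota_{2}t)$. I would split on $\int_{0}^{\infty}|v_{di}|\,d\tau$: if it diverges, then $\varpi_{i}\to 0$ forces $\rho_{i}\to 0$, and $e_{yi}\to 0$ follows from a persistency-of-excitation argument on the reduced time-varying $e_{yi}$ dynamics driven by the non-vanishing $v_{di}$; if it is finite, then $\varpi_{i}\to\varpi_{\infty}>0$ keeps the dither $\iota_{0}\varpi_{\infty}\tanh(\iota_{1}|e_{yi}|)\sin(\iota_{2}t)$ persistently exciting, and a Matrosov-type (or contradiction) argument examining $\dot e_{yi}$ and $\ddot e_{yi}$ shows $e_{yi}$ cannot stay bounded away from zero. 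In either case $e_{yi}\to 0$, whence $\rho_{i}\to 0$ and finally $e_{\theta i}=\bar e_{\theta i}+\rho_{i}\to 0$, establishing \eqref{lim_e}.
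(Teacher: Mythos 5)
Your Lyapunov analysis coincides with the paper's own proof: up to the rescaling $V = V_{i}/\gamma_{4}$, your function, the cancellation of the gyroscopic terms $\pm\omega_{i}e_{xi}e_{yi}$, the cancellation of the cross term via the factor $\gamma_{4}$, the bound on $\dot{V}$ under $\gamma_{3} > \sup_{t}|\dot{\theta}_{di}(t)|$, the $\mathcal{L}_{1}$ estimate $\int_{0}^{\infty}|v_{di}|\varpi_{i}\,d\tau \le 1$ obtained from $\dot{\varpi}_{i} = -|v_{di}|\varpi_{i}$, the resulting boundedness of all errors, and the Barbalat step yielding $e_{xi}\to 0$ and $\bar{e}_{\theta i}\to 0$ are exactly the paper's equations \eqref{Vi}--\eqref{L1}. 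The divergence is in the final and hardest step, $e_{yi}\to 0$, and there your argument has a genuine gap.

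Your case split is on whether $\int_{0}^{\infty}|v_{di}|\,d\tau$ diverges, and in the divergent case you conclude via ``a persistency-of-excitation argument \ldots driven by the non-vanishing $v_{di}$.'' But divergence of that integral makes $v_{di}$ neither persistently exciting nor non-vanishing: take $v_{di}(t) = 1/(1+t)$, whose integral diverges while $v_{di}\to 0$; then $\varpi_{i}(t) = \exp\left(-\int_{0}^{t}|v_{di}(\tau)|d\tau\right) = 1/(1+t) \to 0$ as well, so in this regime \emph{both} of your excitation sources --- the reference velocity and the dither in \eqref{rho_i} --- die out simultaneously, and neither of your two case arguments applies. This intermediate regime ($v_{di}\to 0$ but $\int|v_{di}| = \infty$) is precisely the delicate situation the weight $\varpi_{i}$ is designed around, and it cannot be dispatched by PE reasoning; your case-2 ``Matrosov-type (or contradiction) argument'' is likewise only gestured at, not carried out. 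The paper closes this step by a different device: having established $e_{xi}\to 0$ and $\bar{e}_{\theta i}\to 0$, it invokes the \emph{extended} Barbalat's lemma of \cite{Dixon2000Nonlinear,Samson1995Control} on the closed-loop dynamics \eqref{dot_ei} --- roughly, since $e_{xi}\to 0$ and $\dot{e}_{xi} = -\gamma_{1}\tanh(e_{xi}) + \omega_{i}e_{yi}$, the residual $\omega_{i}e_{yi}$ is forced to vanish, and the structure of $\omega_{i}$ in \eqref{c_input} then precludes $e_{yi}$ from remaining bounded away from zero --- with no case split on $\int|v_{di}|$ at all. To repair your proof you would need either to supply a lemma covering the intermediate regime or to adopt the extended-Barbalat route.
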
 

\begin{proof}
	See Appendix A.
\end{proof}

According to the definition of $e_{xi}(t)$, $e_{yi}(t)$, and $e_{\theta i}(t)$, it can be concluded that $s_{i}(t)\rightarrow c_{i}^{\alpha}(t) + b_{i}(t)$ as \eqref{lim_e} holds. Since $c_{i}^{\alpha}(t)$ will converge to the neighbourhood of the geometric center $\frac{1}{n} \sum_{i=1}^{n}p_{i}(t)$, the formation control task is accomplished as \eqref{lim_e} holds.

\section{Simulation Results}

In this section, simulation is conducted to demonstrate the performance of the developed approach. A team of four mobile robots are employed to follow a group of four mobile targets and maintain a rectangle formation. The network structure of the mobile robots is the same as the one shown in Figure~\ref{fig_formation}. The initial positions and velocities of mobile targets are as follows:
\[
\begin{aligned}
p_{1}(0) &= \begin{bmatrix}
1.8 \\
1.2
\end{bmatrix},
p_{2}(0) = \begin{bmatrix}
-1.2 \\
1.8
\end{bmatrix},
\\
p_{3}(0) &= \begin{bmatrix}
-1.8 \\
-1.2
\end{bmatrix},
p_{4}(0) = \begin{bmatrix}
1.2 \\
-1.8
\end{bmatrix},
\\
q_{1}(t) &= q_{0}(t) + \begin{bmatrix}
0.1\cos(0.2t) \\
-0.2\cos(0.4t)
\end{bmatrix},
\\
q_{2}(t) &= q_{0}(t) + \begin{bmatrix}
-0.2\cos(0.4t) \\
0.1\cos(0.2t)
\end{bmatrix},
\\
q_{3}(t) &= q_{0}(t) + \begin{bmatrix}
-0.1\cos(0.2t) \\
0.2\cos(0.4t)
\end{bmatrix},
\\
q_{4}(t) &= q_{0}(t) + \begin{bmatrix}
0.2\cos(0.4t) \\
-0.1\cos(0.2t)
\end{bmatrix},
\end{aligned}
\]
where $q_{0}(t) \in \mathbb{R}^{2}$ is given by
\[
q_{0}(t)= (0.75-0.25\cos(0.24t)) \begin{bmatrix}
\cos(\frac{\pi}{9}+0.5\sin(0.2t)) \\ \sin(\frac{\pi}{9}+0.5\sin(0.2t))
\end{bmatrix}.
\]
Furthermore, the initial positions of the mobile robots are selected as $s_{1}(0)=\begin{bmatrix}
1.3 & 5.2
\end{bmatrix}^{T}$, $s_{2}(0)=\begin{bmatrix}
-7.5 & 2.6
\end{bmatrix}^{T}$, $s_{3}(0)=\begin{bmatrix}
-4 & -5.5
\end{bmatrix}^{T}$, and $s_{4}(0)=\begin{bmatrix}
5.2 & -5.2
\end{bmatrix}^{T}$. 
%The desired formation pattern is a rectangle. 
For the mobile robots, the desired relative positions to the geometric center of mobile targets are given by $b_{1}(0)=\begin{bmatrix}
4 & 4
\end{bmatrix}^{T}$, $b_{2}(0)=\begin{bmatrix}
-4 & 4
\end{bmatrix}^{T}$, $b_{3}(0)=\begin{bmatrix}
-4 & -4
\end{bmatrix}^{T}$, and $b_{4}(0)=\begin{bmatrix}
4 & -4
\end{bmatrix}^{T}$.
In the following, we first evaluate the state decomposition based dynamic average consensus algorithm and then test the formation controller designed in \eqref{c_input}.

Suppose that an external eavesdropper is interested in obtaining the information of mobile target 1 and uses the eavesdropping scheme developed in Section \ref{sec_eavesdrop} to infer $p_{1}(t)$ and $q_{1}(t)$. To better demonstrate the performance of the proposed consensus scheme, both the conventional algorithm in \eqref{dot_x} and the developed privacy-preserving algorithm are used to estimate the geometric center of mobile targets. Figure~\ref{fig_sim1} shows the evolution of the network states as well as the eavesdropping states under the conventional algorithm \eqref{dot_x}. It can be seen that the eavesdropper can successfully infer $p_{1}(t)$ and $q_{1}(t)$ when the agents are updated with algorithm \eqref{dot_x}. The performance of the privacy-preserving scheme \eqref{sub_c} is illustrated in Figure~\ref{fig_sim2}. It is clear that the proposed scheme can achieve dynamic average consensus while protect the privacy of the mobile target.

As discussed in Section \ref{sec_formation}, the dynamic average consensus algorithm is used to estimate the geometric center of mobile targets, and then the mobile robots are driven with the controller \eqref{c_input} to achieve the formation task. Figure~\ref{fig_formation} depicts the motion trajectories of all mobile robots, showing that all robots follow the geometric center by spreading out in a desired rectangle pattern. Moreover, the evolution of the formation tracking errors are presented in Figure~\ref{fig_formation_error}. From Figure~\ref{fig_formation_error}, it can be seen that the tracking errors $e_{xi}(t)$, $e_{yi}(t)$, and $e_{\theta i}(t)$ all asymptotically converge to small values that are close to zero.
\begin{figure}[!htbp]
	\centering
	\hspace{0 in}
	\includegraphics[width=3.3 in]{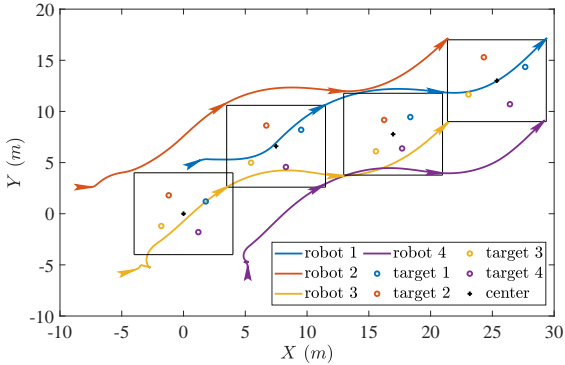}
	\caption{Motion trajectories of all mobile robots.}
	\label{fig_formation_motion}
\end{figure}
\begin{figure}[!htbp]
	\centering
	\hspace{0 in}
	\includegraphics[width=3.3 in]{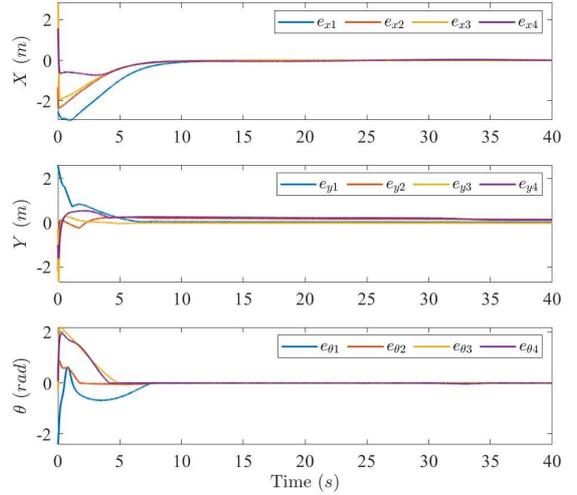}
	\caption{Formation control errors of each mobile robot.}
	\label{fig_formation_error}
\end{figure}

\section{Conclusion}
This paper developed a state decomposition based privacy-preserving method for continuous-time dynamic average consensus. 
%The key idea of the developed method was to decompose the original state into two sub-states. By suitably selecting the new reference signals of the two sub-states, the convergence properties of the consensus algorithm can be retained, and the reference signals of the original state cannot be discovered by the external eavesdropper. 
We showed that existing dynamic consensus algorithm is susceptible to eavesdropping attacks with a carefully designed filter. We then rigorously proved that the state decomposition scheme can enable privacy preservation without affecting the consensus results.
Furthermore, the proposed method was successfully applied to achieve formation control for non-holonomic mobile robots. Simulation results showed that by using the proposed method, the group of networked mobile robot can spread out in a pre-specified formation without disclosing private information.

\appendix
\section{Proof of Theorem \ref{theorem_formation}}
\begin{proof}
	To prove Theorem \ref{theorem_formation}, the Lyapunov function $V_{i}(t) \in \mathbb{R}, i=1, 2, \cdots, n$ is defined as
	\begin{equation} \label{Vi}
	V_{i}(t) \triangleq \frac{1}{2}\gamma_{4}\left(e_{xi}^{2}+e_{yi}^{2}\right) + \frac{1}{2}\bar{e}_{\theta i}^{2}.
	\end{equation}
	Based on \eqref{dot_si}, \eqref{ei}-\eqref{c_input} and the facts that $\dot{c}_{xi}(t) = v_{di}(t)\cos(\theta_{di}(t))$, $\dot{c}_{yi}(t) = v_{di}(t)\sin(\theta_{di}(t))$, the closed-loop error dynamics can be derived, as follows: 
	\begin{equation} \label{dot_ei}
	\begin{aligned}
	\dot{e}_{xi}(t) 
	%&= v_{i}(t)+\omega_{i}(t)e_{yi}(t) - \cos(e_{\theta i}(t))v_{di}(t)
	%\\
	&= -\gamma_{1}\tanh(e_{xi}(t))+\omega_{i}(t)e_{yi}(t),
	\\
	\dot{e}_{yi}(t) &= -\omega_{i}(t)e_{xi}(t) + \sin(e_{\theta i}(t))v_{di}(t),
	\\
	\dot{\bar{e}}_{\theta i}(t) &= -\gamma_{2}\tanh(\bar{e}_{\theta i}(t)) - \dot{\theta}_{di}(t) - \gamma_{3} \text{sgn}(\bar{e}_{\theta i}(t))
	\\
	& \quad \, - \gamma_{4}\frac{\sin(e_{\theta i}(t)) - \sin(\rho_{i}(t))}{\bar{e}_{\theta i}(t)} v_{di}(t) e_{yi}(t).
	\end{aligned}
	\end{equation}	
	After taking the time derivative of \eqref{Vi} and substituting \eqref{dot_ei} into the derivative, it can be determined that
	\begin{equation} \label{dot_Vi}
	\begin{aligned}
	\dot{V}_{i}(t) &= -\gamma_{1}\gamma_{4}e_{xi}(t)\tanh(e_{xi}(t))-\gamma_{2}\bar{e}_{\theta i}(t)\tanh(\bar{e}_{\theta i}(t)) 
	\\
	& \quad \, + \gamma_{4} v_{di}(t)e_{yi}(t)\sin(\rho_{i}(t)) 
	\\
	& \quad \, - \bar{e}_{\theta i}(t)\left( \gamma_{3} \text{sgn}(\bar{e}_{\theta i}(t)) + \dot{\theta}_{di}(t) \right).
	\end{aligned}
	\end{equation}
	If $\gamma_{3}$ is selected sufficiently large to satisfy $\gamma_{3} > \sup_{t\in \left[0, \infty\right)} |\dot{\theta}_{di}(t)|$, then $\dot{V}_{i}(t)$ is upper bounded by
	\begin{equation} \label{Vi_bound}
	\begin{aligned}
	\dot{V}_{i}(t) &\le -W_{i}(t) + \gamma_{4} |e_{yi}(t)| |v_{di}(t)\sin(\rho_{i}(t))|
	\\
	& \le  \sqrt{2 \gamma_{4} V_{i}(t)} |v_{di}(t)\sin(\rho_{i}(t))|,
	\end{aligned}
	\end{equation}
	where $W_{i}(t) \in \mathbb{R}$ is a non-negative function given by
	\begin{equation} \label{Wi}
	W_{i} \triangleq \gamma_{1}\gamma_{4}e_{xi}(t)\tanh(e_{xi}(t)) + \gamma_{2}\bar{e}_{\theta i}(t)\tanh(\bar{e}_{\theta i}(t)).
	\end{equation}
	From \eqref{rho_i}, it can be found that $0 \le \varpi_{i}(t) \le 1$, $\dot{\varpi}_{i}(t) = -|v_{di}(t)|\varpi_{i}(t)$, and $|\rho_{i}(t)| \le \iota_{0} \varpi_{i}(t)$. Using these facts and integrating $|v_{di}(t) \sin(\rho_{i}(t)) |$, 
	it can be concluded that
	\begin{equation} \label{L1}
	\begin{aligned}
	\int_{0}^{t} |v_{di}(\tau) \sin(\rho_{i}(\tau)) |d\tau 
	\le & \int_{0}^{t} |v_{di}(\tau)| |\rho_{i}(\tau) |d\tau
	\\
	\le & \iota_{0} \int_{0}^{t} |v_{di}(\tau)|\varpi_{i}(\tau) d\tau
	\\
	\le & \iota_{0} \int_{0}^{t} -\dot{\varpi}_{i}(\tau) d\tau
	\\
	\le &\iota_{0} \left(\varpi(0)-\varpi(t)\right) \le \iota_{0}.
	\end{aligned}
	\end{equation}
	Eqn. \eqref{Vi_bound} indicates that $\frac{d\sqrt{V_{i}(t)}}{dt} \le \sqrt{\frac{\gamma_{4}}{2}} |v_{di}(t)\sin(\rho_{i}(t))|$, and then based on \eqref{L1}, it can be deduced that $\sqrt{{V}_{i}(t)} \in \mathcal{L}_{\infty}$, i.e., $V_{i}(t) \in \mathcal{L}_{\infty}$. Furthermore, it can be inferred from \eqref{c_input}, \eqref{Vi}, and \eqref{dot_ei} that $e_{xi}(t)$, $e_{yi}(t)$, $\bar{e}_{\theta i}(t)$, $v_{i}(t)$, $\omega_{i}(t)$, $\dot{e}_{xi}(t)$, $\dot{e}_{yi}(t)$, $\dot{\bar{e}}_{\theta i}(t) \in \mathcal{L}_{\infty}$. 
	Taking the time derivative of $W_{i}(t)$ and using the above boundedness analysis, it can be derived that $\dot{W}_{i}(t) \in \mathcal{L}_{\infty}$, which is a sufficient condition for $W_{i}(t)$ being uniformly continuous. 
	Using \eqref{Vi_bound}, \eqref{L1}, and $V_{i}(t) \in \mathcal{L}_{\infty}$, it can be concluded that $\int_{0}^{t} W_{i}(\tau) d\tau \in \mathcal{L}_{\infty}$. Based on $\int_{0}^{t} W_{i}(\tau) d\tau \in \mathcal{L}_{\infty}$ and the uniform continuity of $W_{i}(t)$, Barbalat's lemma \cite{khalil2002nonlinear} can be exploited to obtain that $\lim_{t \rightarrow \infty} W_{i}(t) = 0$, i.e., $\lim_{t \rightarrow \infty} e_{xi}(t), \bar{e}_{\theta i}(t) = 0$.
	With the aid of the extended Barbalat's lemma \cite{Dixon2000Nonlinear,Samson1995Control}, it can be further deduced that $\lim_{t \rightarrow \infty} e_{yi}(t) = 0$. According to \eqref{bar_ei} and \eqref{rho_i}, it is clear that $\lim_{t \rightarrow \infty} e_{xi}(t), e_{yi}(t), \bar{e}_{\theta i}(t) = 0$ implies $\lim_{t \rightarrow \infty} e_{xi}(t), e_{yi}(t), e_{\theta i}(t) = 0$, which completes the proof of Theorem \ref{theorem_formation}.
\end{proof}	

\bibliographystyle{plain}
\bibliography{IEEEfull,reference}

\end{document}